\newcommand{\alc}{\ensuremath{\mathcal{ALC}}\xspace}
\newcommand{\pure}{pure \flbot}
\newcommand{\el}{\ensuremath{\mathcal{EL}}\xspace}
\newcommand{\flo}{\ensuremath{\mathcal{FL}_0}\xspace}
\newcommand{\flbot}{\ensuremath{\mathcal{FL}_\bot}\xspace}
\newcommand{\eg}{e.g.\ }
\newcommand{\wrt}{w.r.t.\ }
\newcommand{\ie}{i.e.\ }
\newcommand{\Start}{\ensuremath{\Gamma_{start}}\xspace}
\newcommand{\Flat}{\ensuremath{\Gamma_{flat}}\xspace}
\newcommand{\pred}{\ensuremath{\mathbf{Var}}\xspace}
\newcommand{\fun}{\ensuremath{\mathbf{R}}\xspace}
\newcommand{\names}{\ensuremath{\mathbf{N}}\xspace}
\newcommand{\const}{\ensuremath{\mathbf{C}}\xspace}
\newcommand{\s}{\ensuremath{\mathfrak{S}}\xspace}
\newcommand{\Sh}{\ensuremath{\mathcal{S}}\xspace}
\newcommand{\Smaller}{\ensuremath{\mathcal{P}}\xspace}
\newcommand{\inibot}{\ensuremath{s^\bot_{ini}}\xspace}
\newcommand{\inicons}{\ensuremath{s^A_{ini}}\xspace}
\newcommand{\flreg}{
	\ensuremath{\mathcal{FL}_{reg}}\xspace}
\newcommand{\flregbot}{
	\ensuremath{\mathcal{FL_\bot}_{reg}}\xspace}
\newtheorem{definition}{Definition}
\newtheorem{lemma}{Lemma}
\newtheorem{example}{Example}
\newtheorem{theorem}{Theorem}
\newtheorem{claim}{Claim}
\newtheorem*{Rule}{Decreasing rule}
\author{%
		Barbara Morawska$^1$\and
		Dariusz Marzec$^{1}$ \\
		\affiliations
		$^1$Institute of Computer Science, University of Opole, Poland
		\emails
		\{barbara.morawska,  dariusz.marzec\}@uni.opole.pl
%
%
		\thanks{\small	
						\begin{wrapfigure}[4]{r}{0.10\textwidth}
								\vspace{-5mm}
								\hspace{-6mm}
								\includegraphics[width=2cm]{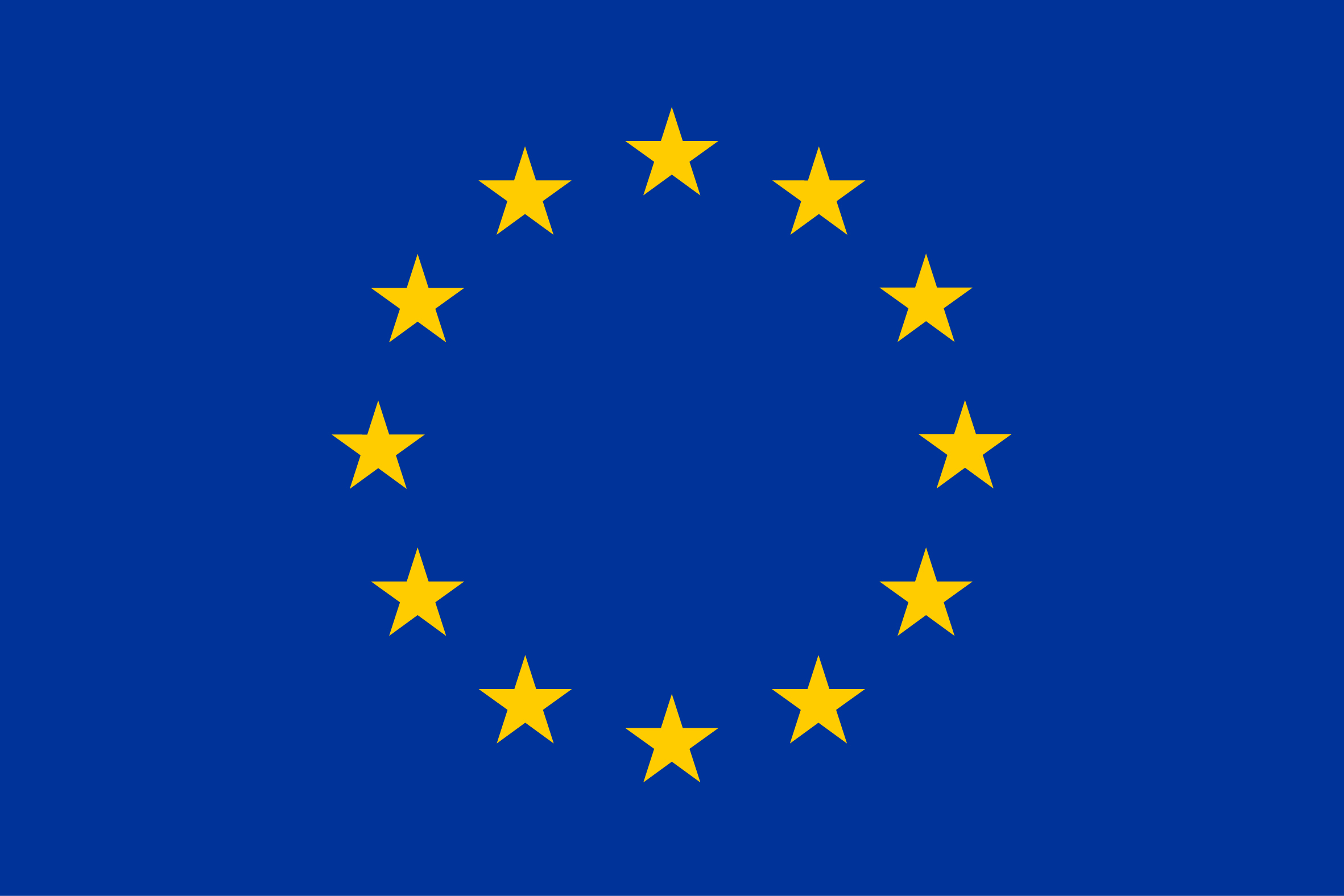}
							\end{wrapfigure}
						This research is part of the project No 2022/47/P/ST6/03196 within the POLONEZ BIS
						programme co-funded by the National Science Centre and the European Union’s Horizon 2020
						research and innovation programme under the Marie Skłodowska-Curie grant agreement
						No. 945339. For the purpose of Open Access, the author has applied a CC-BY public copyright
						licence to any Author Accepted Manuscript (AAM) version arising from this submission.		
				}
			}
\title{Solving unification in the description logic \flbot}
\begin{document}
	\maketitle
	\begin{abstract}
		We present an algorithm for solving the unification problem in the description logic \flbot.
		This logic extends \flo with the bottom constructor, and thus supports
		conjunction, value restrictions, top and bottom constructors.  
		Unification of concepts can be a useful tool for  ontology maintenance; however, little is  known
		about unification even in small, restricted description logics.
		The unification problem has been solved only for \flo and \el.
		This paper contributes to the ongoing effort to extend these results to richer logics.
		Our algorithm runs in exponential time with respect to the size of the problem.
	\end{abstract}
	
\section{Introduction}
Description Logics (DLs) are designed as a formal framework to represent and manipulate information stored in the form of an ontology -- a set of definitions of  concepts using simple concepts called \emph{names} and complex concepts.
Complex concepts are constructed from
primitive names using relations and constructors. There are many DLs that differ in the constructors they provide for building complex concepts out of simple ones.

Unification of concepts in DLs was proposed in \cite{BaNa-JSC01} as a non-standard reasoning service that can assist in maintaining large ontologies by identifying and eliminating redundancies.
%

%

Nevertheless, little is known about unification in DLs. The problem has been solved and is well understood only for sub-Boolean DLs such as \el and \flo. In this work, we focus on  the unification of concepts in an extension of \flo, namely \flbot.

The description logic \flbot allows the construction of complex concepts from a countable set of concept names and role names using conjunction, the top ($\top$) and bottom ($\bot$) constructors, and a value restriction of the form $\forall r.C$ where $r$ is a relation with the value of the second argument restricted to a concept $C$.
The description logic \flo provides the same constructors except for $\bot$.
Both of these lightweight DLs  have polynomial-time subsumption algorithms. 

 In order to define the unification problem in a restricted description logic, we first designate some concept names as variables. Then, given a pair of concepts $C$ and $D$ that may contain variables,
we seek a substitution that makes $C$ and $D$ equivalent (or makes one subsumed by the other).
For example, consider the pair of concepts:\\
\texttt{Viral_disease}$\sqcap\,\forall$\texttt{has_attribute.Infectious},\\
\texttt{Infectious\_disease}$\, \sqcap\, \forall$\texttt{has_cause.Virus}.
These concepts can be made equal by treating \texttt{Infectious\_disease} and \texttt{Viral_disease} as  variables which can be substituted by other concepts or in other words, further specified.
Then a unifier will be a mapping, or a set of definitions:
\begin{itemize}
	\item \texttt{Infectious\_disease}$:=$\texttt{Disease}$\sqcap$ $\forall$\texttt{\mbox{has\_attribute}.Infectious} 
	\item \texttt{Viral_disease}$:=$\texttt{Disease}$\sqcap\forall$\texttt{\mbox{has\_cause}.Virus}.
\end{itemize}

As illustrated in the above example, the substitution for variables that we aim to obtain from a unification algorithm consists of a set of definitions for previously undefined concept names.
Nevertheless, in what follows, we will treat the unification problem primarily as a decision problem: “Does such a substitution exist?” If it does, an example solution can then be recovered.

Unification of concepts was first studied and solved for the description logic \flo, \cite{BaNa-JSC01}, where the problem was shown to be ExpTime-complete.The unification algorithm presented in that work proceeds by first reducing the unification problem to the problem of solving formal language equations, and then further reducing it to an emptiness test for a tree automaton.

Using similar methods, the unification problem was later solved for the description logic \flreg  in \cite{Baader2001} and for \flregbot in \cite{Baader2002}. The description logic \flreg extends \flo by allowing regular expressions constructed over the set of role names, in place of simple role names, for defining value restrictions. The description logic \flregbot further extends \flreg with the $\bot$ constructor. In the same paper, the authors admitted that the unification in \flbot does not yield to their method. In the latter work, however, the authors noted that unification in \flbot does not yield to their method.

Since then, there have been many--mostly unpublished--attempts to solve the unification problem for \flbot. In \cite{Borgwardt}, a new approach to unification in \flo\ was presented. In \cite{Morawska2021}, an attempt was made to extend this approach to unification in \flbot. However, this attempt was flawed: it appears to be impossible to reduce unification in \flbot\ to that in \flo, contrary to the conjecture made in that work.
Here we present the algorithm which extends the unification in \flo to that in \flbot, but not by reduction. Instead, we provide a generalization of the unification approach for \flo from \cite{Borgwardt}. Unification in \flo is thus a special case of the proposed algorithm.
The paper is organized as follows. In Section~\ref{section:flbot}, we formally present the logic \flbot. The unification problem is defined in Section~\ref{section:unification-problem}. Then, in Sections~\ref{section:overview} (overview), \ref{section:normalization} (preprocessing), \ref{section:shortcuts}, and \ref{section:main} (main computation), we present the algorithm.
In the next three sections (\ref{section:termination}, \ref{section:soundness}, and \ref{section:completeness}), we prove the correctness of the algorithm. The paper concludes with final remarks in
Section~\ref{section:conclusions}.

\section{The description logic \flbot}\label{section:flbot}
The concepts in \flbot are generated by the following grammar from a finite set of concept names \names and a finite set of role names \fun:
\[
C ::= A \mid C \sqcap C \mid \forall r.C \mid \top \mid \bot,
\]
where $A \in \names$ and $r \in \fun$. 

The concepts are interpreted as subsets of a non-empty domain. An interpretation is a pair consisting of a domain and an interpreting function, $(\Delta^I, \cdot^I)$. A concept name $A$ is interpreted by a subset of the domain, $A^I \subseteq \Delta^I$, a role name $r$ is interpreted as a binary relation, $r^I \subseteq \Delta^I \times \Delta^I$, and a concept conjunction is interpreted by intersection: $(C_1 \sqcap C_2)^I = C_1^I \cap C_2^I$, value restriction $\forall r.C$ is interpreted as the set:
$(\forall r.C)^I = \{e \in \Delta^I\mid \forall d \in \Delta^I((e,d) \in r^I \implies d \in C^I)\}$; $\top^I = \Delta^I$ and $\bot^I = \emptyset$.

Based on this semantics, we define the equivalence and subsumption relations between concepts as follows: $C \equiv D$ iff $C^I = D^I$ and
$C \sqsubseteq D$ iff $C^I \subseteq D^I$, in every interpretation $I$.

The \textbf{subsumption (equivalence) problem} is then defined as follows.\\
\textbf{Input:} a pair of concepts $C, D$.\\
\textbf{Output:}\emph{``Yes"}, if  $C \sqsubseteq D$ ( $C \equiv D$), and \emph{``No"} otherwise.\\
In deciding the subsumption or equivalence problem between concepts in \flbot, it is convenient to bring them first into a normal form. 

A concept is in \textbf{normal form} if it is a conjunction of concepts of the form
$\forall r_1.(\forall r_2.( \dots \forall r_n. A))$, where $A$ is  a concept name or $\top$ or $\bot$. The
empty conjunction is equivalent to $\top$. 

Since for all concepts $C_1$ and $C_2$, the equivalence  $\forall r.(C_1 \sqcap C_2) \equiv \forall r.C_1 \sqcap \forall r.C_2$ holds in \flbot, each \flbot-concept is equivalent to a concept in normal form. 	

Concepts of the form $\forall r_1.(\forall r_2.(\dots \forall r_n. A)\dots)$ are called  \emph{particles}.
We identify a conjunction of particles with the set of its conjuncts.
In other words, every concept in normal form is a set of particles, and the empty set of particles corresponds to $\top$.

For brevity, a particle of the form $\forall r_1.(\forall r_2.(\dots \forall r_n. A)\dots)$ will be written as $\forall v.A$, where $v = r_1r_2\dots r_n$, and
$A$ may only be a concept name, $\top$ or $\bot$.
The symbol  `$v$' is a word over $\fun$, and we call it a \emph{role string}. We say that the size of a particle is the length of its role string, $|v|$. Relation $\le$ on role strings denotes the prefix relation, and $<$ denotes the proper prefix relation.

\begin{definition}(prefix)\label{definition:prefix}
We say that a particle $P$ is a \emph{prefix} of $P'$ in the following sense.
\begin{itemize}
	\item $P = \forall v. \bot$, $P' = \forall v'.\bot$ and $v < v'$ or
	\item  $P = \forall v. \bot$, $P' = \forall v'.A$ and $v \le v'$, where $A$ is a concept name.
	In particular, $\bot$ is a prefix of $A$.
\end{itemize}
\end{definition}

We call a particle of the form $\forall v.\top$, a $\top$-particle (top-particle), a particle of the form $\forall v.\bot$, a $\bot$-particle (bottom-particle), and a particle of the form
$\forall v.A$, an $A$-particle, for a concept name $A$. Notice that only a $\bot$-particle may be a prefix of another particle.

Since $\forall v.\top\equiv \top$, any $\top$-particle occurring in a concept $C$ may be simply deleted.
Since $\forall v .\bot \sqsubseteq \forall v'.A$, where $v \le v'$ and $A$ is a concept name or $\bot$, if a concept $C$ contains a particle 
$\forall v.\bot$, then for every $v'$ such that $v \le v'$, any particle $\forall v'.A$ may likewise be deleted from $C$.
Since $\bot \sqsubseteq D$ for any concept $D$, if a concept $C$ contains $\bot$ as a particle ($\bot \in C$), all other particles in $C$ may be deleted.

If there are no deletions applicable to a concept $C$, we say that $C$ is \textbf{reduced}.
Below we consider only reduced concepts in normal form.

For concepts in normal form we have a simple polynomial-time way of deciding the subsumption problem.
This follows directly from the following lemma.
\begin{lemma}(Characterization of subsumption in \flbot)\label{lemma:characterization}
Let $C , D$ be an instance of a subsumption problem, where $C, D$ are concepts in normal form:
$C=\{P_1, P_2, \dots, P_m\}$ and
$D=\{P'_1, P'_2, \dots, P'_n\}$.  Then $C \sqsubseteq D$ if and only if
for every $P'_i \in D$, one of the two conditions holds:
\begin{enumerate}
	\item\label{subsumption:identity} $P'_i \in C$ 
	\item\label{subsumption:prefix} $P'_i = \forall v.A$, where $A$ is a constant or $\bot$, and there exists a bottom particle $\forall v'.\bot$  (where $v'$ may be empty) in $C$ such that
	$v'\le v$. 
\end{enumerate}
\end{lemma}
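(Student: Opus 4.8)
The plan is to prove both directions of the biconditional by working directly with the semantic definitions of subsumption and value restriction, exploiting the normal-form structure in which each concept is a set of particles. The key observation to set up first is a semantic reading of a single particle: for a role string $v = r_1\cdots r_n$ and an element $e$ in a domain, $e \in (\forall v.A)^I$ holds if and only if every element $d$ reachable from $e$ by an $r_1,\dots,r_n$-path lies in $A^I$. Since $C \sqsubseteq D$ reduces to $C \sqsubseteq P'_i$ for each particle $P'_i \in D$ (because $D^I$ is the intersection of the $(P'_i)^I$), the whole statement factors through the single-particle case, and I would state this reduction explicitly at the start.

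For the \textbf{if} direction, I would argue that each of the two listed conditions is sufficient for $C \sqsubseteq \forall v.A$. Condition~\ref{subsumption:identity} is immediate: if the very same particle $P'_i$ occurs in $C$, then $C^I \subseteq (P'_i)^I$ trivially since $C^I$ is an intersection that includes $(P'_i)^I$. For condition~\ref{subsumption:prefix}, I would use the semantics of $\bot$: if $\forall v'.\bot \in C$ with $v' \le v$, then for any $e \in C^I$ every $v'$-successor $d$ of $e$ must lie in $\bot^I = \emptyset$, so $e$ has \emph{no} $v'$-successors, and a fortiori no $v$-successors. Hence the universally quantified implication defining $(\forall v.A)^I$ is vacuously satisfied at $e$, giving $e \in (\forall v.A)^I$. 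This covers both $A$ a concept name and $A = \bot$.

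The \textbf{only-if} direction is where the main effort lies, and I expect it to be the principal obstacle. The plan is to prove the contrapositive: assuming neither condition holds for some $P'_i = \forall v.A \in D$, I construct an interpretation $I$ and an element witnessing $e \in C^I$ but $e \notin (\forall v.A)^I$. The natural construction is a tree-shaped model whose nodes are the prefixes of $v$ and of the role strings appearing in $C$, with $e$ the root and a distinguished branch spelling out $v$ so that $e$ has a genuine $v$-successor $d$. The delicate part is interpreting the concept names so that (a) $e$ satisfies every particle of $C$, forcing certain leaves into the appropriate $A'^I$, while (b) $d \notin A^I$. The failure of condition~\ref{subsumption:prefix} is exactly what guarantees $C$ contains no $\bot$-particle $\forall v'.\bot$ with $v' \le v$, so the branch spelling $v$ need not be truncated by an empty successor set and $d$ can be made to exist; the failure of condition~\ref{subsumption:identity} (together with the concept being \emph{reduced}) must be used to show we can place $d$ outside $A^I$ without violating any particle of $C$ that shares the prefix $v$.

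The care needed is to verify that this assignment is globally consistent: one must check that no particle of $C$ of the form $\forall v.A$ or $\forall v''.A$ with $v \le v''$ forces $d \in A^I$, and here the reducedness assumption and the case analysis on whether $A$ is a concept name or $\bot$ are essential. I would handle the two subcases ($A$ a concept name versus $A = \bot$) separately at the point of defining $A^I$ on the witness node, since for $A = \bot$ "placing $d$ outside $A^I$" means merely ensuring $d$ exists, whereas for a concept name it means explicitly excluding $d$ from $A^I$. Concluding that $I$ witnesses $C \not\sqsubseteq D$ then completes the contrapositive.
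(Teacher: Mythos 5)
Your overall plan is sound, but note that it goes well beyond what the paper itself does: the paper's proof of this lemma is a three-sentence semantic remark that essentially only justifies the \emph{if} direction (condition~1 suffices because $C^I$ is the intersection of the interpretations of its particles; condition~2 suffices because $\bot \sqsubseteq D$ for every $D$ and this persists under prepending a role string), and it never argues the \emph{only-if} direction at all. Your contrapositive countermodel argument is therefore the genuinely new --- and genuinely necessary --- content, and a tree-shaped (in fact chain-shaped) model is the right idea.

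However, the countermodel as you literally describe it would fail, and the failure point is not where you locate the ``delicate part''. You take as domain the prefixes of $v$ \emph{and of the role strings appearing in $C$}, and then plan to fix everything by choosing the interpretation of the concept names. But if $C$ contains a $\bot$-particle $\forall u.\bot$ (whose role string $u$ is, by the failure of condition~2, not a prefix of $v$), your domain contains the node $u$, so the root $e$ has a $u$-successor; since $\bot^I = \emptyset$ is not up for choice, no interpretation of concept names can then make $e$ satisfy $\forall u.\bot$. A $\bot$-particle can only be satisfied by the \emph{absence} of the corresponding path, so what matters is which edges exist, not how names are interpreted. The repair is to shrink the domain to the prefixes of $v$ alone, with edges $(w,wr) \in r^I$ whenever $wr \le v$, and to set $B^I = \Delta^I$ for every concept name $B \neq A$ and $A^I = \Delta^I \setminus \{v\}$ when $A$ is a concept name (when $A = \bot$ no adjustment is needed, since there condition~1 is subsumed by condition~2 and one only needs the $v$-node to exist). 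Then every particle of $C$ whose role string is not a prefix of $v$ is vacuously satisfied at the root; every $\bot$-particle of $C$ is of this kind by the failure of condition~2; every particle $\forall u.A \in C$ with $u \le v$ has $u \neq v$ by the failure of condition~1, so its witness node lies in $A^I$; particles over other concept names hold trivially; and the root fails $\forall v.A$ because the node $v$ exists and, for $A$ a concept name, avoids $A^I$. With this domain, the reducedness of $C$, which you claim is essential, is not needed at all, and both cases for $A$ go through uniformly.
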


\begin{proof}The lemma is justified by the properties of conjunction with respect to concepts in normal form.
	The first statement follows from the fact that each concept is a set of particles, and subsumption is a partial order on such sets.
	The second statement follows from the fact that $\bot$ is always interpreted as the empty set; hence, $\bot \sqsubseteq C$ for every concept $C$.
	This property is preserved for value restrictions constructed by prepending $\bot$ and $C$ with a role string $v$: $\forall v.\bot \sqsubseteq \forall v.C$.
\end{proof}
	When deciding a possible subsumption $\{P_1, \dots, P_n\} \sqsubseteq \forall v.A$, where $A$ is a concept name,
	we can ignore any $P_i$ involving a concept name other than $A$. The reason is that any particle with a concept name different
	than $A$ cannot affect passing any of the tests in Lemma~\ref{lemma:characterization}.

	Similarly, when deciding a possible subsumption $\{P_1, \dots, P_n\} \sqsubseteq \forall v.\bot$, we can ignore any $P_i$ not containing $\bot$,
	since according to the subsumption tests in Lemma~\ref{lemma:characterization} we check only for $\forall v.\bot \in \{P_1, \dots, P_n\} $ or for $ \forall v'.\bot \in \{P_1, \dots, P_n\} $, where $v' \le v$.

	\section{Unification problem}\label{section:unification-problem}

	In order to define the unification problem, we partition the set of concept names \names into two  disjoint sets $\const$ and $\pred$,  called
	constants and  variables, respectively. Variables are concept names that can be 
	substituted by other concepts.

	A substitution is a mapping from variables to \flbot-concepts constructed over a finite set of variables and constants and it is extended to all concepts in a usual way. 
	If  a variable $X$ is assigned $\top$ by a substitution $\gamma$ and  $\gamma$ is clear from the context, we call $X$ a $\top$-variable, and if it is assigned $\bot$, we call it a $\bot$-variable.
	
	The input for unification problem is defined as a set of pairs of \flbot-concepts, $\{(E_1, F_1), \dots, (E_n, F_n)\}$. We call such pairs \emph{possible subsumptions} or \emph{goal subsumptions} for intuitive reasons. The formulation below uses the symbol $\sqsubseteq^?$ instead of a comma, in order to indicate what we want to achieve in a solution.\\	
	
	\noindent
	\textbf{The unification problem}\\
	\textbf{Input:} $\Gamma = \{E_1 \sqsubseteq^? F_1, \dots, E_n\sqsubseteq^? F_n\}$,
	where $E_1, \dots, E_n, F_1, \dots F_n$ are \flbot concepts in normal form, and $F_1, \dots, F_n$ are particles. The concepts may contain variables.
	
	\noindent
	\textbf{Output:} "Yes" if there exists a substitution $\gamma$ (called a solution or a unifier of $\Gamma$) such that
	$\gamma(E_1) \sqsubseteq \gamma(F_1), 
	\dots, \gamma(E_n) \sqsubseteq \gamma(F_n)$; otherwise, "No".\\

	%
	%
	
	In deciding unification, it is sufficient to consider ground unifiers.
	A substitution $\gamma$ is ground if it assigns to variables concepts that contain no variables.
Based on our earlier observations about the subsumption problem, we may further assume that the unification problem involves at most one constant.
If multiple constants occur, we can decompose the problem into several subproblems, one for each constant  $A_1, \dots, A_n$ in  $\Gamma$.
For each $A_i$, let $\Gamma_{A_i}$ be the problem obtained from $\Gamma$ by replacing all constants other than $A_i$ with $\top$.
To solve $\Gamma$, we solve $\Gamma_{A_1}, \dots, \Gamma_{A_n}$ separately.
If $\gamma_1, \dots, \gamma_n$ are solutions to these subproblems, then $\gamma = \gamma_1 \cup \cdots \cup \gamma_n$ is clearly a solution to $\Gamma$.
%
%


%

%
	\begin{example}\label{example:unification}
		Let $\Gamma = \{\forall r.B \sqcap X \sqsubseteq^? \forall r.X \sqcap A\}$
		
		The problem is divided into two sub-problems.
		$\Gamma_A = \{X \sqsubseteq^? \forall r.X, X \sqsubseteq^? A\}$,
		$\Gamma_B = \{\forall r.B \sqcap X  \sqsubseteq^? \forall r.X\}$.
		
	Let $\gamma_A = [X \mapsto \bot]$ and $\gamma_B = [X \mapsto B]$.
Then $\gamma = \gamma_A \cup \gamma_B = [X \mapsto (\bot \sqcap B) \equiv \bot]$ is a solution to $\Gamma$.
	\end{example}
	
%
%
%
%

	
	\section{Overview of the algorithm}\label{section:overview}
	
Our unification algorithm consists of two stages: normalization and solving the normalized problem.

	\begin{enumerate}
		\item Normalization is a non-deterministic polynomial-time procedure.
		Some unification problems can already be determined to be unifiable at this stage; however, certain non-deterministic choices may lead to failure for the current branch.
		If all possible choices lead to failure, the algorithm terminates with failure as the final answer -- the problem is not unifiable.
		
		\item If the problem passes the first stage, we obtain a partial substitution (for variables set to $\bot$ or $\top$) and a unification problem in a special form, consisting of subsumptions of the type shown in Figure \ref{figure:subsumptions}.
		If there are no start subsumptions, the problem has a solution assigning $\top$ to all variables.
		If there are no flat subsumptions, the problem has a solution. The construction is explained in Figure~\ref{figure:construction}.
		
		If the set of flat unsolved subsumptions is non-empty, the algorithm enters the next stage, namely \emph{computing with shortcuts} (Section~\ref{section:shortcuts}).
		At this stage, our algorithm can fail for the current choices in the normalization stage, or 
		it can return success. If it returns success, this means that the problem is unifiable. We can recover an example of a unifier from the computations.
	\end{enumerate}
	
	\section{Normalization}\label{section:normalization}
	
	The input for the algorithm is $\Gamma = \{E_1 \sqsubseteq^? P_1, \dots, E_n \sqsubseteq^? P_n\}$,
	where $E_1, ..., E_n$ are sets of particles and $P_1, \dots, P_n$ are particles that can contain variables or are $\bot$-particles or A-particles where $A$ is a constant. We assume that $\Gamma$ contains only one constant $A$. Since all concepts are reduced, $\top$ does not occur in $\Gamma$.

	Initially all goal subsumptions are marked as \emph{unsolved}.
	
	
	\begin{enumerate}
		\item Initial steps. For each variable $X$  we have to guess if in a solution:
		 $X$ should be $\top$, or $X$ should be $\bot$, or if $X$ should be neither $\top$ nor $\bot$, but it should contain a constant $A$.
		 The last guess is expressed by maintaining a Boolean variable $A_X$, which should be true if $X$ contains $A$ under a solution or 
		 false in the opposite case. If $X$ is guessed to be $\bot$ or $\top$, then it is replaced by 
		$\bot$ or $\top$, respectively, everywhere in the goal subsumptions.
%
%
		\item Implicit Solver. We exhaustively apply the rules from Figure~\ref{figure:implicit}--in the given order--to the goal subsumptions. 
		Implicit Solver may detect unifiability of a given problem at this  stage or  detect failure for the choices made for each variable in the first step.

		\begin{figure}[h]
			\begin{framed}
				\textbf{Implicit Solver rules:}

				\begin{enumerate}[ref=\theenumii]
					\item\label{rule-1} If $\bot$ is found in a subsumption $s$ on its left side at top level, then we label $s$ as solved.
					\item\label{rule-2} If $\bot$ is found on the right side of a subsumption $s$, and not on the left side, then return \textbf{failure}. 
					\item\label{rule-3} If a $\top$-particle is found in a subsumption $s$ on its right side, then we label $s$ as solved.	
					\item\label{new3a} If $\top \sqsubseteq^? X$ is an unsolved subsumption and $X$ is not chosen to be $\top$, then return \textbf{failure}. 			
					\item If a $\top$-particle is found in a subsumption $s$ on its left side, we delete
					this particle from $s$.
					\item If a particle $P$ occurs on the right side of $s$ and also on the left side (at the top level), then label $s$ as solved.
					\item\label{new1} If the constant $A$ occurs on the right side of $s$ and there is a variable $X$ on the left side, such that $A_X$ is true, then label $s$ as solved.
					\item If the constant $A$ occurs on the right side of $s$, but on the left side of $s$ there is neither  $A$ nor a variable $X$ with $A_X$ true, then return \textbf{failure}.
					\item If $A$ occurs on the left side of a subsumption $s$ and $X$ is on its right side, where $A_X$ is false,
					then delete this occurrence of $A$. 
					\item\label{new2} If  $A_X$ is true and $X$ occurs on the right side of $s$, but neither $A$ nor a variable $Y$ with $A_Y$ true occurs on the left hand side of $s$, then return \textbf{failure}.
					\item If there are no unsolved subsumptions left, return \textbf{success}.
				\end{enumerate}
			\end{framed}
			\caption{Implicit Solver rules}
			\label{figure:implicit}
		\end{figure}
		Note that after applying rule~(\ref{rule-1}) and (\ref{rule-2}) of Figure~\ref{figure:implicit}, all variables that are guessed to be $\bot$ are  eliminated from the goal. Hence in particular there is no $\bot$ on the left side of any goal subsumption after applying these rules.
		
		\item Normalization loop. At this point, the algorithm enters a loop in which it alternates between applying the rules from Figure~\ref{figure:flattening} and Figure~\ref{figure:implicit}.
		
		\begin{enumerate}
			
			\item Flattening (Applying rules from Figure~\ref{figure:flattening}). We choose ('don't care' nondeterministically) an unsolved non-flat subsumption from the unification problem.

				An unsolved goal subsumption $C_1 \sqcap \cdots \sqcap C_n \sqsubseteq^? P$ is \textbf{non-flat} if 
				\begin{itemize}
					\item $P = \forall r.P'$, where $P'$ is a particle  or 
					\item there is $i$, $1 \le i \le n$ such that $C_i = \forall r.C_i'$, where $C_i'$ is a  particle.
					\item there is  $i$, $1 \le i \le n$ such that $C_i = A$.
				\end{itemize}
				
			We flatten the subsumption by applying a rule from Figure~\ref{figure:flattening}.
			
					\begin{figure}[h]
					\begin{framed}
						Consider a \textbf{non-flat, unsolved} goal subsumption,\\ $s=C_1 \sqcap \cdots \sqcap C_n \sqsubseteq^? P$.
						\begin{enumerate}[ref=\theenumii]
							\item If $P$ is of the form $\forall r.P'$,
							replace $s$ with $s^{-r}$.	
							\item\label{new} If $P$ is a variable $X$,
							replace $s$ by the following set of goal subsumptions:				
								$\{s^{-r} \mid r \in \fun \}$,
								and if $A_X$ true, add
								$C_1^{A} \sqcap \cdots \sqcap C_n^{A} \sqsubseteq^? A$ to the goal.
							\end{enumerate}		
						\end{framed}
						\caption{Flattening rules}	
						\label{figure:flattening}
					\end{figure}

				In Figure~\ref{figure:flattening} we use the following notation. 
				If $P$ is a particle and $r$ a role name ($r \in \fun$), we define an operation on particles $ P^{-r}$ in the following way:\\
				$
				P^{-r} = \begin{cases}
					P^{r} & \text{ if } P \text{ is a  variable and }\\
					&  P^r \text{ its \emph{decomposition variable}, }\\
					P' & \text{ if } P = \forall r.P', \\
					\top & \text{ in all other cases. } 
				\end{cases}
				$
				
				If $s$ is a  subsumption of the form $C_1 \sqcap \cdots \sqcap C_n \sqsubseteq^? D$, where
				$C_1, \dots, C_n, D$ are particles, we define
				$s^{-r} = C_1^{-r} \sqcap \cdots \sqcap C_n^{-r} \sqsubseteq^? D^{-r}$.
				
				
				Let $P$ is a particle, then we define $P^{A}$, for a constant $A$ in the following way:\\
				$
				P^{A} = \begin{cases}
					P & \text{ if } P \text{ is $A$  or a variable,}\\
					\top & \text{ in all other cases.} 
				\end{cases}
				$
				
				\textbf{Decomposition variables} The flattening rules applied to unsolved and non-flat goal subsumptions may generate new variables. These variables are called \emph{decomposition variables}.
				For example if $X \sqsubseteq^? \forall r.Y$ is a goal subsumptions, then the first flattening rule applies and the subsumption is replaced by $X^r \sqsubseteq^? Y$.
				The intended meaning of $X^r$ is that in a solution $\gamma$, $\gamma(X^r) = \{P \mid \forall r.P \in \gamma(X)\}$.
			Decomposition variables are created only when necessary, via the flattening rules; hence, not all variables have decomposition variables.
				For each role name $r$ and a variable $X$, there can be at most one decomposition variable $X^r$,
				sometimes called the $r$-decomposition variable of $X$
			 with 
			 $X$ being  a \emph{parent} of $X^r$.
				
				To enforce the intended meaning for a decomposition variable $X^r$, we create
				an \textbf{increasing subsumption} of the form $X \sqsubseteq^? \forall r. X^r$.
				These subsumptions are kept separately and are not subject to flattening.
				
			However, an increasing subsumption alone does not ensure that, for every particle $\forall r.P $ in the substitution for $X$, 
				$P $ is in the substitution for $X^r$. For this reverse direction, we introduce a  \textbf{decreasing rule}.
				
					We say that a substitution $\gamma$ \emph{obeys the decreasing rule} if and only if for every
					variable $X$ in $\Gamma$, whenever $X^r$ is defined, the following condition holds:
					
					\begin{Rule}\label{rule}
					If a particle of the form $\forall r.P \in \gamma(X)$, then $P \in \gamma(X^r)$.
					\end{Rule}
				
				The decreasing rule is not expressible by a goal subsumption, but is enforced by our algorithm.
				
				Creation of new variables triggers making choices for them as described in the initial steps.
				
%
%
%
%
				
				\item Implicit Solver (Applying rules from Figure~\ref{figure:implicit}). After a flattening step we apply the rules of Implicit Solver are applied exhaustively to all goal subsumptions—whether flat or not yet flat—excluding start and increasing subsumptions. 
				
			\end{enumerate}
			
		\end{enumerate}

		The normalization process terminates in polynomial time with a polynomial increase in the size of the goal. Due to variable choices (such as $\top$, $\bot$, and $A_X$), it is non-deterministic. It produces a normalized goal consisting of the four sets of subsumptions in Figure~\ref{figure:subsumptions}.
		\begin{figure}[H]
			\begin{framed}
		\begin{itemize}
			\item \emph{Solved subsumptions}: they are marked as \emph{solved} by the rules of Implicit Solver and \emph{deleted} from the goal.
			\item \emph{Start subsumptions}:
			$\small\Start = \{ X \sqsubseteq A \mid A_X \text{ is true }\}$ $\cup \{X \sqsubseteq \bot \mid X \text{ is guessed to be a } \bot\text{-}\text{variable} \}$.
			
			\item \emph{Increasing subsumptions}: $\small\{ X \sqsubseteq^? \forall r.X^r \mid \text{ $X^r$ is de-}$ $\text{fined for $X$}\}$.
			\item \emph{Flat subsumptions}:\\
			$\Flat =$ $\{ X_1 \sqcap \cdots \sqcap X_n$ $\sqsubseteq^? Y$ $\mid$ $\text{ all } X_1, \dots, X_n, Y$ $ \textrm{ are variables}\}$ (unsolved subsumptions).
		\end{itemize}
		\end{framed}
		\caption{Normalized goal}\label{figure:subsumptions}
\end{figure}
			The following example illustrates the necessity of the decreasing rule for obtaining correct results.

\begin{example}\label{example:abstr-flatt}
	Let our unification problem contain the goal subsumptions:
	$ \forall rr.\bot\sqsubseteq^? Z,  Z \sqsubseteq^? X, X \sqsubseteq^? \forall r.\bot$.
	The normalized goal is then:\\	
	\emph{start subsumption: } $X^r \sqsubseteq^? \bot$;
	\emph{flat subsumption: }   $Z \sqsubseteq^? X$; the increasing subsumptions are omitted.

	The first start subsumption forces $X^r$ to be a $\bot$-variable, and thus, by the increasing subsumption $X \sqsubseteq^? \forall r.X^r$, $\forall r.\bot$ must appear in the substitution for $X$.
	By the flat subsumption, we know that $\forall r.\bot$ must also appear in the substitution for $Z$.
	However, there is nothing that can force $\bot$ into $Z^r$ unless the decreasing rule is applied.
	If we do apply the decreasing rule, then $\bot$ is forced into $Z^r$—but in that case, we discover that the goal is not unifiable, because $\forall r.\bot \not\sqsubseteq \bot$.
	
	Without the decreasing rule, the following substitution would be incorrectly accepted as a solution:\\
	$
	Z \mapsto \{\forall r.\bot\}, Z^r \mapsto \top, X \mapsto \{\forall r.\bot\}, X^r \mapsto \{\bot\}.
	$
\end{example}

	Next we prove the correctness of the normalization process.
	
	\begin{lemma}\label{lemma:completenessFlattening}(Completeness of normalization)\\
		Let $\Gamma$ be a unification problem, and let $\Gamma'$ be its normalized form obtained through the normalization process described above. If $\gamma$ is a solution of $\Gamma$, then there exists a solution $\gamma'$ of $\Gamma'$ extending $\gamma$ by assigning values to the newly introduced decomposition variables.
	\end{lemma}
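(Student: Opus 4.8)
The plan is to show that any solution $\gamma$ of the original problem $\Gamma$ can be extended to a solution $\gamma'$ of the normalized problem $\Gamma'$ by choosing the right values for the decomposition variables and making the right non-deterministic guesses during normalization. The key insight is that normalization is a \emph{non-deterministic} procedure, so ``completeness'' means: given $\gamma$, there exists a branch of the normalization whose choices are consistent with $\gamma$, and along that branch $\gamma'$ solves the resulting normalized goal.

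\smallskip

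\noindent\textbf{Step 1 (Fixing the guesses from $\gamma$).} Given a solution $\gamma$, I would first dictate all the non-deterministic choices of the \emph{initial steps}: for each variable $X$, guess $X$ to be $\top$ if $\gamma(X) \equiv \top$, guess $X$ to be $\bot$ if $\gamma(X) \equiv \bot$, and otherwise declare $A_X$ true exactly when $A \in \gamma(X)$ (i.e.\ when $A$ is a top-level conjunct of the reduced $\gamma(X)$). This makes every guess ``correct'' with respect to $\gamma$, which is what guarantees the Implicit Solver never returns \textbf{failure}: each failure rule fires only when $\gamma$ would have to violate subsumption (e.g.\ rule~\ref{rule-2} fires when $\bot$ is on the right but not the left, which under a correct guess would force $\gamma(E) \not\sqsubseteq \gamma(F)$). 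Here I would lean on Lemma~\ref{lemma:characterization} to argue, rule by rule, that each solved-labelling is justified and each failure rule is unreachable under the $\gamma$-consistent branch.

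\smallskip

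\noindent\textbf{Step 2 (Defining $\gamma'$ on decomposition variables).} For each decomposition variable $X^r$ introduced by flattening, I would set $\gamma'(X^r) = \{\,P \mid \forall r.P \in \gamma(X)\,\}$, exactly as the ``intended meaning'' stated in the text, extending $\gamma'$ recursively since decomposition variables may themselves have children. I would verify that this definition simultaneously satisfies the \emph{increasing subsumption} $X \sqsubseteq^? \forall r.X^r$ (immediate: every $\forall r.P \in \gamma(X)$ has $P \in \gamma'(X^r)$, so $\gamma(X) \sqsubseteq \forall r.\gamma'(X^r)$) and \emph{obeys the Decreasing rule} (also immediate from the definition, which is an equality rather than an inclusion). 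The $\bot$-propagation in the prefix relation (Definition~\ref{definition:prefix}) must be handled carefully here: if $\forall r.\bot \in \gamma(X)$ then $\bot \in \gamma'(X^r)$, so $\gamma'(X^r)$ reduces to $\bot$, which is consistent.

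\smallskip

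\noindent\textbf{Step 3 (Preservation along the flattening rules).} The core of the proof is an \emph{invariant}: at every step of the normalization loop, $\gamma'$ (the current extension of $\gamma$) solves the current goal. I would argue this rule by rule from Figure~\ref{figure:flattening}. For the rule replacing $s$ by $s^{-r}$, the essential fact is that $\gamma(E) \sqsubseteq \forall r.P'$ holds iff $\gamma(E)^{-r} \sqsubseteq \gamma(P')$ under the $(\cdot)^{-r}$ operation, which I would establish as a small equivalence about how subsumption interacts with ``peeling off'' a leading $r$ (using $\gamma'(X^r)$ for the variable case and $\top$ for the mismatched cases). For rule~\ref{new} (the case $P = X$ a variable), I would check that $\gamma(E) \sqsubseteq \gamma(X)$ decomposes into the conjunction of all $s^{-r}$ together with the $A$-test $C_1^A \sqcap \cdots \sqcap C_n^A \sqsubseteq^? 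A$ when $A_X$ is true — again invoking Lemma~\ref{lemma:characterization} to split subsumption into its per-particle conditions indexed by the leading role and by the constant $A$.

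\smallskip

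\noindent\textbf{The main obstacle} I anticipate is rule~\ref{new}: showing that the single subsumption $\gamma(E) \sqsubseteq \gamma(X)$ is equivalent to the whole \emph{family} $\{s^{-r} : r \in \fun\}$ plus the constant test. A particle $\forall v.B \in \gamma(X)$ is covered either because some $\forall v.B \in \gamma(E)$ (identity) or because a $\bot$-prefix in $\gamma(E)$ subsumes it; I must verify that decomposing by the first role of $v$ (handing the suffix to the appropriate $s^{-r}$) together with the length-zero case $B = A$ (the constant test) exhausts all particles of $\gamma(X)$ and matches Lemma~\ref{lemma:characterization} exactly. The subtle point is that $\bot$-particles in $\gamma(X)$ create prefix obligations that can be satisfied \emph{either} at the current level \emph{or} after decomposition, so the correspondence between the conditions of Lemma~\ref{lemma:characterization} and the decomposed goals needs care — particularly checking that the $(\cdot)^{-r}$ operation sends $\bot$ to $\top$ in the right cases and correctly transfers $\bot$-prefixes to the decomposition variables via Step 2. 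Once this equivalence is nailed down, the induction on the number of normalization steps closes routinely, and $\gamma'$ restricted to $\Gamma'$ is the claimed solution.
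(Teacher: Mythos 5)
Your proposal is correct and takes essentially the same approach as the paper: you guide the nondeterministic guesses by $\gamma$, extend $\gamma$ to decomposition variables via $\gamma'(X^r) = \{P \mid \forall r.P \in \gamma(X)\}$ (which, as you note, makes the increasing subsumptions and the decreasing rule hold automatically), and close with a per-step preservation invariant that is exactly the paper's Claim~\ref{claim}, including its two cases (peeling $\forall r$ from the right side, and decomposing a variable right side into the family $\{s^{-r} \mid r \in \fun\}$ plus the constant test). The only cosmetic difference is that you state the $(\cdot)^{-r}$ correspondence as an ``iff,'' while completeness needs only the forward direction (the converse is the paper's separate soundness lemma), and your ``main obstacle'' about $\bot$-prefixes is discharged in the paper by observing that $\bot$ at the top level of the left side is already handled by the Implicit Solver before flattening.
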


	\begin{proof}
		Since we assume $\gamma$ is a ground unifier of $\Gamma$, it can guide the choices in the process of normalization.
		
		\noindent
	The unifier $\gamma$ determines the initial  choices for variables in the following way.
		For a  variable $X$:
		\begin{enumerate}
			\item if $\gamma(X) = \bot$, then $X$ is a $\bot$-variable,
			\item if $\gamma(X) = \top$, then $X$  is a $\top$-variable,
			\item if $A\in \gamma(X)$, then $A_X$ is true otherwise $A_X$ is false.
		\end{enumerate}
		
		Moreover, if a decomposition variable $X^r$ is introduced during the application of a flattening rule, we extend $\gamma$ with the assignment
		$X^r \mapsto \{ P \mid \forall r.P \in \gamma(X) \}$.
		Note that with this extension, $\gamma$ satisfies both the increasing subsumptions and the decreasing rule.
		Also observe that if the set $\{ P \mid \forall r.P \in \gamma(X) \}$ is empty, then $X^r$ should be chosen as a $\top$-variable.
		
		Given that the normalization process terminates, it suffices to prove the lemma for a single step.
		This is shown in the following Claim~\ref{claim}. 
		
			\end{proof}
\begin{claim}\label{claim}		
			If $\Gamma_i$ is a unification problem and $\gamma_i$ is its solution, then either $\Gamma_i$ is already
			normalized or there is a flattening rule applicable to $\Gamma_i$.
		\end{claim}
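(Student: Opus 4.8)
The plan is to prove the claim in contrapositive form: assuming $\Gamma_i$ is not normalized, I would exhibit an applicable flattening rule. Since each pass of the normalization loop first flattens and then applies Implicit Solver to saturation, the states at which this claim is invoked are exactly those in which Implicit Solver has already been run exhaustively; moreover, because $\gamma_i$ is a genuine solution of $\Gamma_i$, none of the failure rules of Figure~\ref{figure:implicit} could have legitimately fired, as that would contradict the existence of $\gamma_i$. I would therefore fix such a saturated, failure-free state. By the description of the normalized goal in Figure~\ref{figure:subsumptions}, and since start and increasing subsumptions are maintained separately and never flattened, ``$\Gamma_i$ is not normalized'' means that some \emph{unsolved} ordinary goal subsumption $s = C_1 \sqcap \cdots \sqcap C_n \sqsubseteq^? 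P$ fails to be flat.

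Writing the right-hand particle as $P = \forall v.B$, I would split on its shape. If the role string $v$ is non-empty, then $P = \forall r.P'$ and the first flattening rule of Figure~\ref{figure:flattening} applies, rewriting $s$ as $s^{-r}$. If $v$ is empty and $B$ is a variable $X$, then $P = X$; since $s$ is non-flat while its right side is a bare variable, the non-flatness must stem from the left side containing a conjunct $\forall r.C_i'$ or a constant, so the second flattening rule~(\ref{new}) applies. It then remains only to exclude, for an unsolved $s$, the cases $B = \top$, $B = \bot$, and $B = A$, and this is where saturation and the hypothesis that $\gamma_i$ solves $\Gamma_i$ are essential.

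I would dispatch these cases by contradiction, invoking Lemma~\ref{lemma:characterization}. The case $P = \top$ is immediate, since rule~(\ref{rule-3}) would already have marked $s$ solved. For $P = \bot$, solution-hood gives $\gamma_i(C_1 \sqcap \cdots \sqcap C_n) \sqsubseteq \bot$, so the reduced left side equals $\bot$; because a conjunction of particles reduces to $\bot$ only when one conjunct is literally $\bot$, and because $\bot$-variables were already replaced by $\bot$ in the initial steps, some $C_j$ is a syntactic $\bot$ --- but then rule~(\ref{rule-1}) would have solved $s$, while its absence would instead have triggered the failure rule~(\ref{rule-2}). For $P = A$, Lemma~\ref{lemma:characterization} forces $\gamma_i$ of the left side to contain either a top-level $\bot$ or a top-level $A$; the first alternative is excluded as in the $\bot$ case, whereas a top-level $A$ can only originate from a literal $A$ on the left, solved by the rule for a particle occurring on both sides, or from a left variable $X$ with $A_X$ true, solved by rule~(\ref{new1}), the sole remaining configuration being exactly the premise of the failure rule for a right-hand constant. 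In each sub-case $s$ would be solved or a failure rule would fire, contradicting that $s$ is unsolved in a failure-free saturated state; hence the role string of $P$ is non-empty or $B$ is a variable, and a flattening rule applies.

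I expect the third step to be the main obstacle. The two flattening rules act only when the right-hand side is a value restriction or a variable, so a surviving unsolved subsumption carrying a constant or $\bot$ on its right would leave normalization stuck with no applicable rule; excluding this is the genuine content of the claim. The argument is delicate because it must simultaneously combine the subsumption characterization of Lemma~\ref{lemma:characterization}, the syntactic elimination of $\bot$-variables performed before flattening, and the precise guard conditions of the Implicit-Solver rules, and it depends essentially on $\gamma_i$ being a solution to ensure that no failure rule ever legitimately applies.
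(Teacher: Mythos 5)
Your proof is correct, and at its core it follows the same route as the paper: pick a non-flat unsolved subsumption, split on the shape of its right-hand side, and use the Implicit Solver rules to handle the degenerate shapes. The difference lies in where the effort is spent. The paper's proof simply announces ``we have two cases to consider'' --- right side of the form $\forall r.P'$ or a variable $X$ --- and never explicitly argues why an unsolved non-flat subsumption cannot have $\top$, $\bot$, or the constant $A$ on its right; your third step supplies exactly that missing justification, combining Lemma~\ref{lemma:characterization}, the prior elimination of $\bot$-variables, and saturation of the Implicit Solver (rules~\ref{rule-1}, \ref{rule-2}, \ref{rule-3}, \ref{new1} and the same-particle rule) to show each such case would already be solved or would have triggered failure, contradicting unsolvedness. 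Conversely, the paper's proof does more than the literal statement requires: within each of its two cases it also verifies that the extension of $\gamma_i$ to the new decomposition variables still unifies the subsumptions produced by the rule (e.g.\ that $P'_1 = \forall r.P'_2$ forces $P'_2 \in \gamma(C^{-r}_1 \sqcap \cdots \sqcap C^{-r}_n)$), because the claim's proof is embedded in the induction of Lemma~\ref{lemma:completenessFlattening} and must carry the inductive invariant. So your argument proves the claim as stated --- and more carefully than the paper on case exhaustiveness --- but if it were spliced into Lemma~\ref{lemma:completenessFlattening} one would still need the preservation argument that the paper interleaves with the applicability check. One small caveat: your exclusion of $\bot$-assigned variables on the left tacitly assumes the initial guesses were made consistently with $\gamma_i$ (so that any variable with $\gamma_i(X)=\bot$ was syntactically replaced); this is legitimate in context, since the paper's surrounding lemma lets $\gamma$ guide those choices, but it is worth stating as a hypothesis since the claim itself does not mention it.
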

		\begin{proof}(Proof of the claim)
			Assume that $\Gamma_i$ is not normalized. Hence there is a non-flat, unsolved subsumption in $\Gamma_i$.
			We have two cases to consider.
			\begin{enumerate}
				\item A non-flat subsumption has the form: $s=C_1 \sqcap \cdots \sqcap C_n \sqsubseteq \forall r.P'$. It
				is unified by $\gamma$. We assume that there is no $\bot$ at the top level of $\gamma(C_1 \sqcap \cdots \sqcap C_n)$. Otherwise it would be solved by Implicit Solver,   Figure~\ref{figure:implicit},  rule~\ref{rule-1}.
				
				Notice that every particle $P_1 \in \gamma(\forall r. P')$ is of the form $\forall r. P_2$.
				Hence for each such particle $P_1$, there is a particle $P'_1 \in \gamma(C_1 \sqcap \cdots \sqcap C_n)$, such that $P'_1 \sqsubseteq P_1$.
				Since $P'_1$ cannot be $\bot$, $P'_1 = \forall r.P'_2$ and $P'_2 \sqsubseteq P_2$.
				The first flattening rule tells us to replace $s$ with $s^{-r}$ which will have form:
				$C_1^{-r} \sqcap \cdots \sqcap C_n^{-r} \sqsubseteq^? P'$. 
				
				
				We know that $P'_2 \in \gamma(C_1^{-r} \sqcap \cdots \sqcap C_n^{-r})$, while $P_1 \in \gamma'(P')$ and $P_1$ was chosen to be an arbitrary particle in $\gamma(P')$ ($=\gamma'(P')$). Hence  the extension of $\gamma$ to the decomposition variables,
				satisfies the subsumption $C_1^{-r} \sqcap \cdots \sqcap C_n^{-r} \sqsubseteq P'$, which replaces the original subsumption in the goal.
				
				\item  Now consider a non-flat subsumption:  $s=C_1 \sqcap \cdots \sqcap C_n \sqsubseteq X$. Notice that at this moment we know that $X$ is neither  $\bot$ nor $\top$ in $\gamma$. Let $\gamma(X) = \{P_1, \dots, P_m\}$. The subsumption may not be flat  due to $C_i = \forall r.C'$ in the left hand-side of $s$.
				
				For every role $r \in \fun$, the following is true.
				If there is a particle $P_i \in \gamma(X)$ such that $P_i = \forall r.P'$, then the set of such particles with the top role $r$ is non-empty.
				Then $\gamma$ extended to decomposition variables satisfies the subsumption  $C_1^{-r} \sqcap \cdots \sqcap C_n^{-r} \sqsubseteq X^r$.
				
				If $\gamma(X^r)$ is empty, then $X^r$ should be guessed to be a $\top$-variable and the subsumption is solved by the Implicit Solver,  Figure~\ref{figure:implicit}, rule~\ref{rule-3}.

						Moreover if $A_X$ is true then $A \in \gamma(X)$ and thus $C_1^A \sqcap \cdots \sqcap C_n^A \sqsubseteq A$ is obviously
						unified by $\gamma$, if the original subsumption was.
			\end{enumerate}
			
		\end{proof}
		
		This ends the proof of the claim and thus of Lemma~\ref{lemma:completenessFlattening}.

	Notice that the decreasing rule is not mentioned in the formulation of Lemma~\ref{lemma:completenessFlattening}. This is because if a substitution $\gamma$ 
	is a ground unifier of $\Gamma$, which is not yet extended to the decomposition variables, after extending $\gamma$ by defining
	$\gamma(X^r) := \{P \mid \forall r.P \in \gamma(X)\} $ for each decomposition variable, the decreasing rule is obviously satisfied.
	On the contrary, the decreasing rule must be mentioned in the claim of soundness of the normalization procedure.
	
	\begin{lemma}(Soundness of normalization)\label{lemma:soundness}\\
		If $\Gamma'$ is a normalized unification problem obtained  from $\Gamma$,
		and if $\gamma$ is a unifier of $\Gamma'$ satisfying the decreasing rule for decomposition variables, then it is also a unifier of $\Gamma$.
	\end{lemma}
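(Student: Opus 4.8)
The plan is to prove soundness by induction on the length of the (successful) normalization run $\Gamma = \Gamma_0 \to \Gamma_1 \to \cdots \to \Gamma_k = \Gamma'$, arguing each step in the reverse direction: if $\gamma$ unifies $\Gamma_{i+1}$ and obeys the decreasing rule, then $\gamma$ unifies $\Gamma_i$. Since the start and increasing subsumptions accumulate in $\Gamma'$ and are never deleted, the fixed unifier $\gamma$ satisfies all of them throughout the run, and by hypothesis it also obeys the decreasing rule. I take $\gamma$ to be consistent with the branch's guesses; this is forced by $\Gamma'$ for the relevant cases (a $\bot$-guess gives the start subsumption $X \sqsubseteq \bot$, hence $\gamma(X)=\bot$, and $A_X$ true gives $X \sqsubseteq A$, hence $\gamma(X)\sqsubseteq A$), while reinstating a $\top$-guessed variable just means setting $\gamma(X)=\top$ and leaves $\gamma$ applied to the goal unchanged. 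It then suffices to verify the reverse implication for each individual transformation.

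The routine part is the Implicit Solver of Figure~\ref{figure:implicit}. The failure rules never fire on a successful branch, so only the ``solved'' and ``delete'' rules matter. Each ``delete'' rule is either an equivalence (removing a $\top$-conjunct, since $\top$ is the identity of $\sqcap$) or a weakening of the left-hand side (removing a conjunct such as $A$); reinstating a left conjunct only shrinks $\gamma(L)$, so by Lemma~\ref{lemma:characterization} the original subsumption is preserved. Each ``solved'' rule marks a subsumption automatically satisfied by $\gamma$: rule~\ref{rule-1} because $\bot \sqsubseteq D$ for all $D$; rule~\ref{rule-3} and the top-on-right rule because $C \sqsubseteq \top$; the particle-on-both-sides rule because a conjunct of the left subsumes the right; and rule~\ref{new1} because the start subsumption $X \sqsubseteq A$ gives $\gamma(L) \sqsubseteq \gamma(X) \sqsubseteq A$.

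The heart of the argument is undoing the flattening rules of Figure~\ref{figure:flattening}, where both the increasing subsumptions and the decreasing rule are indispensable. Write $L = C_1 \sqcap \cdots \sqcap C_n$ and $L^{-r} = C_1^{-r} \sqcap \cdots \sqcap C_n^{-r}$. For the first rule I must recover $\gamma(L) \sqsubseteq \gamma(\forall r.P')$ from $\gamma(L^{-r}) \sqsubseteq \gamma(P')$: every particle of $\gamma(\forall r.P')$ has the form $\forall r.Q$ with $Q \in \gamma(P')$, hence $\gamma(L^{-r}) \sqsubseteq Q$, and I trace the witness supplied by Lemma~\ref{lemma:characterization} (either $Q \in \gamma(C_i^{-r})$ or a $\bot$-prefix there) back through $C_i$, prepending $r$. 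When $C_i = \forall r.C_i'$ this is immediate since $C_i^{-r} = C_i'$; when $C_i$ is a variable $Y$ with $C_i^{-r} = Y^r$, I invoke the increasing subsumption $Y \sqsubseteq \forall r.Y^r$ to pull the witness from $\gamma(Y^r)$ up into $\gamma(Y) \subseteq \gamma(L)$. The second rule is the crux: to recover $\gamma(L) \sqsubseteq \gamma(X)$ from the family $\{\gamma(L^{-r}) \sqsubseteq \gamma(X^r)\}_{r \in \fun}$ together with the optional $\gamma(L^A) \sqsubseteq A$, I take an arbitrary $Q \in \gamma(X)$ and show $\gamma(L) \sqsubseteq Q$. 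If $Q = \forall r.Q'$, the decreasing rule yields $Q' \in \gamma(X^r)$, whence $\gamma(L^{-r}) \sqsubseteq Q'$, and I lift the witness exactly as above; if $Q = A$, consistency with the guess gives $A_X$ true, and since $\gamma(L) \sqsubseteq \gamma(L^A)$ (the $P^A$ operation only replaces conjuncts by $\top$), the subsumption $\gamma(L^A) \sqsubseteq A$ closes the case; the $Q = \bot$ case is handled by the start and consistency assumptions.

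The main obstacle I anticipate is precisely this second flattening case, where the decreasing rule must \emph{decompose} each particle $\forall r.Q'$ of the right-hand variable $X$ into $\gamma(X^r)$, while the increasing subsumptions simultaneously \emph{recompose} the decomposed content of left-hand variables; Example~\ref{example:abstr-flatt} shows that dropping the decreasing rule genuinely breaks soundness, so the proof cannot sidestep it. A secondary care point is bookkeeping the base cases ($Q$ a constant or $\bot$) uniformly through the $P^A$ operation and the start subsumptions, and confirming that the quantification ``for every $r \in \fun$'' in the second rule is what guarantees the family of pieces covers every top-role occurring in $\gamma(X)$.
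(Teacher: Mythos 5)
Your proposal is correct and follows essentially the same route as the paper's proof: a reverse induction over the normalization steps, dismissing the Implicit Solver rules as easily sound, using the increasing subsumptions to lift witnesses back through the first flattening rule, and using the decreasing rule (plus the constant subsumption $C_1^A \sqcap \cdots \sqcap C_n^A \sqsubseteq^? A$) to decompose the particles of the right-hand variable in the second. The differences are only presentational: the paper argues the first rule by monotonicity on whole concepts rather than by particle-level witness tracing through Lemma~\ref{lemma:characterization}, and it leaves implicit the consistency-with-guesses assumption (e.g.\ $A \in \gamma(X)$ only if $A_X$ is true) that you state explicitly.
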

	
	\begin{proof}
		It is enough to consider the rules of Figure~\ref{figure:flattening}, because the
		reduction rules of Implicit Solver (Figure~\ref{figure:implicit}) are obviously sound.
		Let us assume that in the process of flattening we have obtained a subsumption:
		$s=C_1 \sqcap \dots \sqcap C_n \sqsubseteq P$.
		Let us assume that $\gamma$ unifies this subsumption and it obeys the decreasing rule together with the increasing subsumptions for the decomposition variables. We consider which rule was applied to produce $s$. This rule must have been applied to some subsumption $s'$, which we denote by $s' \to s$.
		We have to show that $s'$ is also unified by $\gamma$. For the inductive argument, we assume that for every $s''$ such that $s' \to s''$, $s''$ is unified by $\gamma$.
		\begin{enumerate}
			\item If the first rule of Figure~\ref{figure:flattening} was applied, then 
			we know that the original subsumption was of the form:
			$s' = C_1' \sqcap \cdots C_k' \sqsubseteq^? \forall r.P$ for a role name $r$ and $k \ge n$.
			
			$s = {s'}^{-r}$. Since $\gamma$ unifies $C_1 \sqcap \dots \sqcap C_n \sqsubseteq P$
			it must also unify $\forall r.C_1 \sqcap \cdots \sqcap \forall r.C_n \sqsubseteq \forall r.P$.
			Hence, $\gamma$ unifies  $\forall r.{C'_1}^{-r} \sqcap \cdots \sqcap \forall r.{C'_k}^{-r} \sqsubseteq \forall r.P$ and
		  because of the monotonicity of subsumption,
			$\gamma$ must unify $s'$ too.
			
			
			\item If $s$ was obtained by the second rule from Figure~\ref{figure:flattening}, from 
			some subsumption  $s' = C'_1 \sqcap \cdots \sqcap C'_n \sqsubseteq^? X$, $k \ge n$,
			then  $s = {s'}^{-r}$ for some role name $r$.
			
			We have to justify that $\gamma$ unifies $s'$.
			By the decreasing rule, we know that 
			$\gamma(X) = \{\forall r.\gamma(X^r) \mid r \in \fun\} \cup \{\alpha \mid \alpha = A ,\text{ if } A_X \text { and } \alpha=\top \text{ otherwise }\}$, where for some $r \in \fun$,
			$\gamma(X^r)$ may be $\top$. Since $s' \to {s'}^{-r}$ for each $r \in \fun$, by the induction assumption,
			$\gamma$ unifies  ${s'}^{-r}$. Since $\gamma$ unifies the increasing subsumptions, it unifies ${C'}_1 \sqcap \cdots \sqcap {C'}_n \sqsubseteq^? \forall r.X^r$ for each $r \in \fun$, while if $A_X$ is true, then ${C'}_1 \sqcap \cdots \sqcap {C'}_n \sqsubseteq^? A$ is also unified by $\gamma$, and
			thus $\gamma(C'_1 \sqcap \cdots \sqcap C'_n) \sqsubseteq^? \gamma(X)$ as required.
		\end{enumerate}
	\end{proof}

		If the set of \Flat is empty, the algorithm terminates with \emph{success}.
	In this case we can construct a unifier $\gamma$ of the goal as shown in Figure~\ref{figure:construction}.
	
		\begin{figure}[H]
			\begin{framed}
	For each variable $X$.
		
		\begin{enumerate}
			\item If the choice for $X$ is $\top$, then define $\gamma(X) := \top$.
			\item If the choice for $X$ is $\bot$, then define $\gamma(X) := \bot$.
			\item If $A_X$ is true, then define $\gamma(X) := \{A\}$.
			\item\label{repeat-step} Repeat until there is no change:
			\begin{itemize}
				\item If $X^r$ is defined  for $X$,
				then redefine $\gamma(X) := \gamma(X) \cup \{\forall r.\gamma(X^r)\}$.
			\end{itemize}
			\item If $\gamma(X)$ is still undefined, set $\gamma(X) := \top$.
		\end{enumerate}
		
		\end{framed}
			\caption{Solution in the case of an empty set of flat subsumptions.}\label{figure:construction}
	\end{figure}
		The initial assignments of $\bot$ and $A$ to variables satisfy the start subsumptions.
The repeat-step (\ref{repeat-step}) will terminate with all increasing subsumptions satisfied.
The solved subsumptions are satisfied by any substitution that respects the choices made when the rules of the Implicit Solver were applied; hence, $\gamma$ is a unifier.

		\section{Shortcuts}\label{section:shortcuts}
		
		If the normalization process have not terminated with \emph{success} or \emph{failure},
		then the set of flat unsolved subsumptions is not empty. We proceed to the next stage, \ie
		computing \emph{shortcuts}.
		
		Informally speaking, a shortcut is a pair of sets of variables from $\Gamma$, $(\Sh,\Smaller)$, where \Sh is non-empty. The set of variables \Sh is called the \emph{main part} of the shortcut and \Smaller--the prefix part.
		
		Here one can see that the concept of shortcut for \flbot is a generalization of a similar concept in the context of \flo, since in the latter case one set of variables would be sufficient, namely the main part, \cite{Borgwardt}.
		
		We will use a shortcut to distribute any particle over the flat subsumptions.
		The particle is placed in the assignment for variables in the main part and some prefixes of this particle should be assigned to
		the variables in the prefix part. Such a distribution of particles, viewed as a substitution that assigns $\top$ to all remaining variables, should unify all flat subsumptions in $\Flat$.
		
		\begin{example} 
			Let the set of flat subsumptions be defined as: $\Flat = \{Y \sqcap X \sqsubseteq^? X^r,\, X^r \sqsubseteq^? X\}$.
			Then an example of a shortcut is $(\Sh, \Smaller)$ where $\Sh = \{X\}$ and $\Smaller = \{X^r, Y\}$.
			If we substitute $X$ with a particle $\forall v.\bot$ (or $\forall v.A$) and assign certain $\bot$-particles in $X^r$ and $Y$ \eg $\forall v'.\bot$,
			where $v'< v$ (or $v' \le v$ in the case of $A$-particle), then obviously, $\Flat$ is unified by such a substitution.
			On the other hand $(\{X^r\}, \{X\})$ is not a shortcut, because substituting $X^r$ with a particle $\forall v.\bot$ and $X$ with $\forall v'.\bot$ with $v' < v$ does not satisfy the subsumption $X^r \sqsubseteq^? X$.
			A good shortcut for this problem is also $(\{X, X^r\}, \emptyset)\}$, because if we substitute $X$ and $X^r$ with the same particle $P$, the flat subsumptions are satisfied.
		\end{example}
		
		Thus the main property of a shortcut $(\Sh, \Smaller)$ is that if the variables in $\Sh$ are substituted with a particle $P$, and the variables in \Smaller with prefixes of $P$ the flat subsumptions $\Flat$ are satisfied.
		We say that a pair $(\Sh, \Smaller)$ satisfies a subsumption $Y_1 \sqcap \cdots \sqcap Y_n \sqsubseteq^? X  \in \Flat$ if the following implications are true:
		\begin{itemize}
		\item	if $X \in \Sh$, then there is $Y_i$ 
		such that $Y_i \in \Sh \cup \Smaller$,
		\item if $X \in \Smaller$, then there is $Y_i$ 
		such that, $Y_i \in \Smaller$.
		\end{itemize}
		%
		Let  $S_{ini}^\bot = \{X \mid X \sqsubseteq^? \bot \in \Start\}$
		and  $S_{ini}^A = \{X \mid A_X \text{ is true }\}$.

		\begin{definition}\label{definition:shortcut}(Shortcut)
			
			A shortcut is a pair $(\Sh, \Smaller)$, where
			\begin{enumerate}
				\item\label{definition:shortcut:condition-disjoint}  $\Sh$, $\Smaller$ are disjoint sets of variables, 
				\item\label{definition:shortcut:condition-smaller} $\Sh\not=\emptyset$,
				\item\label{definition:shortcut:condition-bottom} If  $\Sh \cap  S^\bot_{ini} \not= \emptyset$, then  $\Sh  \subseteq S^\bot_{ini}$ and $\Smaller = \emptyset$,
				\item\label{condition:definition:shortcut} 
				$(\Sh, \Smaller)$ satisfies all flat subsumptions.
			\end{enumerate}
		\end{definition}

		We have two kinds of special shortcuts called \emph{initial}.
		
		\begin{itemize}
			\item (the initial shortcut for bottom) $\inibot = (S_{ini}^\bot, \emptyset)$, 
			\item (an initial shortcut for constant)  $\inicons=(S_{ini}^A, \Smaller)$, where \Smaller is an arbitrary set such that $\Smaller \subseteq S_{ini}^\bot$. 
		\end{itemize}
		
		We do not specify the initial shortcut for the constant explicitly; any shortcut that meets the shortcut conditions will suffice.
	$\bot$-variables may occur in the prefix part of such a shortcut due to Definition~\ref{definition:prefix} of a prefix. The concept $\bot$ is considered a prefix of a constant, whereas $\bot$ itself has no proper prefix.
		
		Condition~\ref{definition:shortcut:condition-disjoint} of Definition~\ref{definition:shortcut} requires that the main part of a shortcut be disjoint from the prefix part. Otherwise, we could have a particle $P$ and its prefix in the same variable $X$. This makes no sense, since $P$ will be removed as redundant.

		
		Condition~\ref{definition:shortcut:condition-bottom} states that the $\bot$-variables occur in main parts of some shortcuts, for which the prefix part must be empty. This is because no
		particles can be proper prefixes of $\bot$.

		A restricted resolving relation is defined between two shortcuts, excluding those whose main components involve bottom variables. We will subsequently extend this relation to cover such cases.
		
		\begin{definition}\label{definition:resolve}(resolving relation)
			Let $s_1 = (\Sh_1, \Smaller_1)$ and $s_2=(\Sh_2, \Smaller_2)$ be two shortcuts.
			We have two cases:
			\begin{enumerate}
			\item  $\Sh_1 \not\subseteq \Sh_{ini}$.
			We say that $s_1$ is restricted-resolved with $s_2$ \wrt a role name $r$ iff the following conditions are
			satisfied.
			\begin{enumerate}
				\item\label{resolve:1} There is a decomposition variable $X^r$ in $\Sh_1$ and for each decomposition variable $Y^r$ in $\Sh_1$, its parent $Y$ is in $\Sh_2$.
				\item\label{resolve:2} For each decomposition variable $Y^r$ in $\Smaller_1$, its parent $Y$ is in $\Smaller_2$.
				\item\label{resolve:3} For each variable $Z$ in $\Sh_2$ such that $Z^r$ is defined, $Z^r$ is in $\Sh_1$.
				\item\label{resolve:4} For each variable $Z$ in $\Smaller_2$ such that $Z^r$ is defined, $Z^r$ is in $\Smaller_1$.
			\end{enumerate}
			\item If $\Sh_1 \subseteq \Sh_{ini}$ we require,  in addition to the above conditions, that 
			 $\Smaller_2 \subseteq \Sh_{ini}$.
		\end{enumerate}
		\end{definition}

%
%
The resolving relation between $s_1$ and $s_2$ with respect to a role name $r$ is denoted by $s_1 \stackrel{r}{\to} s_2$.
Shortcuts with $\bot$-variables in their main part require special treatment for the following reasons:
First, the prefix part of such shortcuts is empty, since $\bot$ has no proper prefixes.
Second, in this case, $\Smaller_2$ must contain only $\bot$-variables, as only such variables can serve as prefixes for particles of role depth 1.

		We say that a shortcut $s = (\Sh, \Smaller)$ \emph{depends on} a shortcut $s' = (\Sh', \Smaller')$, denoted by
		$s \dasharrow s'$,
		if $\Smaller = \Sh' \cup \Smaller'$. In this case we also say that $s'$ \emph{supports} $s$.
		
		For example, if $(\{X\},\{Y,Z\})$ is a shortcut, then it depends on a shortcut $(\{Y\}, \{Z\})$ and this shortcut depends on $(\{Z\}, \emptyset)$.
		Obviously, every sequence of depending shortcuts has to terminate.
		
		\begin{definition}\label{definition:height0}
			A  shortcut $(\Sh, \Smaller)$ is of height $0$ iff $\Sh$ does not contain any decomposition variables.
		\end{definition}
		
	The term \emph{shortcut of height $0$}, as introduced in \cite{Borgwardt}, denotes a shortcut that, when used in constructing a solution, does not produce particles with greater role depth. The height $0$ further indicates that such shortcuts are computed at the initial stage of the overall shortcut computation.
	
		\begin{definition}\label{definition:valid}
			A shortcut $s=(\Sh, \Smaller)$ is \emph{valid} iff
			\begin{enumerate}
				\item 	it is a shortcut of height $0$ or for every decomposition variable $X^r$ in $\Sh$, $s$ is resolved by a valid shortcut \wrt to the role $r$;
				\item $\Smaller= \emptyset$ or $s$ depends on a valid shortcut.
			\end{enumerate}
		\end{definition}
		
	We will show that there is a unifier for an \flbot-unification problem if and only if we can compute
	an acyclic graph of valid shortcuts, starting with the shortcuts of height $0$ and extending it along
	the resolving relation with the set of already computed shortcuts, while containing the initial shortcuts.
		
	To decide unification, our algorithm attempts to detect the existence of such an acyclic structure of shortcuts.
	If no such structure exists, the problem is not unifiable; otherwise, a solution can be computed from it.
		
		\subsection{Level-one shortcuts}
		This section contains some lemmas that are important for the soundness proof in Section~\ref{section:soundness}.
		A \emph{ level-one shortcut} is a shortcut of the form $(\Sh, \Smaller)$ such that $\Sh \not\subseteq S_{ini}^\bot$ and either
		$\Smaller \subseteq S_{ini}^\bot$ or $\Smaller$ is empty.
		
		
		Obviously, for each shortcut $(\Sh,\Smaller)$, if $\Smaller \not\subseteq S_{ini}^\bot$ one can construct a level-one shortcut:
		$(\Sh', \Smaller')$, namely $\Sh' = \Sh \cup \Smaller\setminus S_{ini}^\bot$ and $\Smaller' = \Smaller \cap  S_{ini}^\bot$.
		We call such shortcut the \emph{level-one shortcut for} $(\Sh, \Smaller)$.
		
		To prove the following lemmas, we assume an acyclic structure of valid shortcuts connected by
		the resolving relation. This acyclic structure will be produced by our algorithm, which computes all
		valid shortcuts.
		
		\begin{lemma}\label{lemma:adding-shortcuts}
			Let $(\Sh_1, \Smaller_1)$ and $(\Sh_2, \Smaller_2)$ be two valid shortcuts such that
			$\Smaller_1 = \Sh_2 \cup \Smaller_2$. Then a shortcut $(\Sh_1 \cup \Sh_2, \Smaller_2)$ is also valid.
		\end{lemma}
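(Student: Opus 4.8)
The plan is to check the two relevant definitions in turn: first that the pair $(\Sh_1 \cup \Sh_2, \Smaller_2)$ satisfies the four requirements of Definition~\ref{definition:shortcut}, and then that it is valid in the sense of Definition~\ref{definition:valid}. I would open by recording two structural facts that drop out immediately from the hypothesis $\Smaller_1 = \Sh_2 \cup \Smaller_2$: that $s_1$ depends on $s_2$ (so $s_1 \dasharrow s_2$), and that $\Sh_1$ and $\Sh_2$ are disjoint, since $\Sh_2 \subseteq \Smaller_1$ while $\Sh_1 \cap \Smaller_1 = \emptyset$ by Condition~\ref{definition:shortcut:condition-disjoint} for $s_1$. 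I will also use repeatedly the set identity $(\Sh_1 \cup \Sh_2) \cup \Smaller_2 = \Sh_1 \cup \Smaller_1$, which holds because $\Sh_2 \cup \Smaller_2 = \Smaller_1$.

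For the shortcut conditions, disjointness of $\Sh_1 \cup \Sh_2$ from $\Smaller_2$ follows from $\Sh_2 \cap \Smaller_2 = \emptyset$ together with $\Sh_1 \cap \Smaller_2 \subseteq \Sh_1 \cap \Smaller_1 = \emptyset$, and Condition~\ref{definition:shortcut:condition-smaller} is immediate from $\Sh_1 \neq \emptyset$. Satisfaction of an arbitrary flat subsumption $Y_1 \sqcap \cdots \sqcap Y_n \sqsubseteq^? X \in \Flat$ I would obtain by a short case analysis on where $X$ lives: if $X \in \Smaller_2$ or $X \in \Sh_2$ the witnessing $Y_i$ is supplied by the fact that $s_2$ satisfies the subsumption, and if $X \in \Sh_1$ the witness comes from $s_1$, after rewriting $\Sh_1 \cup \Smaller_1$ as $(\Sh_1 \cup \Sh_2) \cup \Smaller_2$. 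The tight point is Condition~\ref{definition:shortcut:condition-bottom} on $\bot$-variables. Here I would first note that $\Sh_1 \cap S_{ini}^\bot = \emptyset$: since $\Sh_2 \neq \emptyset$ we have $\Smaller_1 \neq \emptyset$, so Condition~\ref{definition:shortcut:condition-bottom} applied to $s_1$ rules out $\Sh_1$ meeting $S_{ini}^\bot$. Hence any intersection of $\Sh_1 \cup \Sh_2$ with $S_{ini}^\bot$ must already lie in $\Sh_2$, and Condition~\ref{definition:shortcut:condition-bottom} for $s_2$ then forces $\Sh_2 \subseteq S_{ini}^\bot$ and $\Smaller_2 = \emptyset$. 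This is exactly where one must be careful, since a $\bot$-variable in $\Sh_2$ sitting next to a non-$\bot$-variable in $\Sh_1$ would break the condition; I would therefore argue that in the intended uses of the lemma (notably the level-one construction, which promotes the non-$\bot$ part $\Smaller \setminus S_{ini}^\bot$ into the main part) one always has $\Sh_2 \cap S_{ini}^\bot = \emptyset$, so this degenerate configuration does not arise.

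For validity I would treat the two clauses of Definition~\ref{definition:valid} separately. The dependency clause is easy: if $\Smaller_2 = \emptyset$ it holds vacuously, and otherwise validity of $s_2$ yields a valid shortcut $s_3$ with $\Smaller_2 = \Sh_3 \cup \Smaller_3$; since the prefix part of the merged shortcut is literally $\Smaller_2$, it depends on the very same $s_3$. The resolving clause is the heart of the lemma, and I would prove it by induction along the acyclic structure of valid shortcuts assumed in this subsection (the resolving relation well-founds the induction). Fix a decomposition variable $W^r \in \Sh_1 \cup \Sh_2$; by the disjointness noted above it lies in exactly one of $\Sh_1$, $\Sh_2$, and validity of $s_1$ (resp.\ $s_2$) supplies a valid resolvent $t_1 = (\Sh_{t_1},\Smaller_{t_1})$ with $s_1 \stackrel{r}{\to} t_1$ (resp.\ $t_2 = (\Sh_{t_2},\Smaller_{t_2})$ with $s_2 \stackrel{r}{\to} t_2$). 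The natural candidate resolvent for the merge is itself a merge, $t' = (\Sh_{t_1} \cup \Sh_{t_2},\, \Smaller_{t_2})$, and checking the four clauses of Definition~\ref{definition:resolve} confirms $(\Sh_1 \cup \Sh_2, \Smaller_2) \stackrel{r}{\to} t'$: each clause splits into the corresponding clause for $t_1$ on the $\Sh_1$-part and for $t_2$ on the $\Sh_2$-part (we are in the generic case of Definition~\ref{definition:resolve}, since $\Sh_1 \neq \emptyset$ and $\Sh_1 \cap S_{ini}^\bot = \emptyset$ give $\Sh_1 \cup \Sh_2 \not\subseteq S_{ini}^\bot$).

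The main obstacle, and the step I expect to absorb most of the work, is closing this induction. To conclude that $t'$ is valid I want it to be an instance of the lemma one level down, which requires the two resolvents to stand in the dependency relation $t_1 \dasharrow t_2$, i.e.\ $\Smaller_{t_1} = \Sh_{t_2} \cup \Smaller_{t_2}$. This alignment is not forced by the resolving clauses alone: they only constrain the \emph{parents} of $r$-decomposition variables in the prefix parts and leave the remaining members of $\Smaller_{t_1}$, $\Sh_{t_2}$, $\Smaller_{t_2}$ unconstrained. I would therefore either pick the resolvents coherently from the acyclic structure produced by the algorithm, or strengthen the induction hypothesis so that the resolvents chosen for $s_1$ and for $s_2$ are guaranteed compatible. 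Securing this coherence, together with the $\bot$-variable boundary case identified above, is where the genuine content of the proof lies; the remaining obligations reduce to routine set manipulations.
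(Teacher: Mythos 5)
Your plan follows the same route as the paper's proof: first verify Definition~\ref{definition:shortcut} for the merged pair, settle the dependency clause of Definition~\ref{definition:valid} by reusing the shortcut supporting $s_2$ (the merge's prefix part is literally $\Smaller_2$), and prove the resolving clause by induction along the acyclic structure of valid shortcuts, resolving the merge by a merge of resolvents. On the shortcut conditions you are in fact more careful than the paper, whose proof opens by simply asserting that $(\Sh_1 \cup \Sh_2, \Smaller_2)$ satisfies Definition~\ref{definition:shortcut}; the degenerate configuration you isolate ($\Sh_2$ consisting of $\bot$-variables next to a non-$\bot$ main part $\Sh_1$) is indeed not excluded by the hypotheses as stated, and your instinct to look at how the lemma is actually invoked is reasonable.

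The genuine gap is that you leave the decisive step open, and you say so yourself. To apply the induction hypothesis to $t' = (\Sh_{t_1} \cup \Sh_{t_2}, \Smaller_{t_2})$ you need the alignment $\Smaller_{t_1} = \Sh_{t_2} \cup \Smaller_{t_2}$, and your proposal only gestures at ``picking the resolvents coherently'' or ``strengthening the induction hypothesis.'' That alignment is precisely how the paper closes its induction: it claims that from $\Smaller_1 = \Sh_2 \cup \Smaller_2$ one gets $\Smaller_1' = \Sh_2' \cup \Smaller_2'$ for the chosen resolvents, ``following from Definition~\ref{definition:resolve},'' and then applies the induction hypothesis to that pair. (You are right that the resolving conditions alone do not force this for arbitrary resolvents---they constrain only $r$-decomposition variables and their parents---so you have put your finger on the tersest point of the paper's own argument; but the paper does assert it and builds on it, whereas your write-up stops there.) A second omission: your candidate resolvent presupposes that both $t_1$ and $t_2$ exist. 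The paper splits the inductive step into two cases: if $\Sh_2$ contains no $r$-decomposition variable, the merge is resolved with respect to $r$ by $t_1$ alone---condition (d) of Definition~\ref{definition:resolve} survives exactly because no $r$-decomposition variable of a variable in $\Smaller_{t_1}$ can then lie in $\Sh_2$, so it must lie in $\Smaller_2$---and only when both main parts contain $r$-decomposition variables are $t_1$ and $t_2$ merged; symmetrically, the base case ($\Sh_1$ free of decomposition variables) uses $t_2$ alone. Without this case analysis and without the alignment step, the induction cannot be completed as written.
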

		
		
		\begin{proof}
			First let us notice that $(\Sh_1 \cup \Sh_2, \Smaller_2)$ satisfies the conditions of Definition~\ref{definition:shortcut}.
			
			
			Since we assume validity, we will use induction on the maximal length $h$ of the path of resolving relations from the shortcut $(\Sh_1, \Smaller_1)$ to a shortcut of height $0$ in an acyclic graph
			of  shortcuts.
			
			If $h=0$, then $\Sh_1$ does not contain any decomposition variables.
			Hence if there are decomposition variables in $\Sh_2$, for each role name $r$,
			by the validity of $(\Sh_2, \Smaller_2)$, there is a valid shortcut $(\Sh_2', \Smaller_2')$ that resolves it \wrt this role.
			
			Obviously, $(\Sh_1\cup\Sh_2,\Smaller_2)$ is resolved by $(\Sh_2', \Smaller_2')$ as well. Hence $(\Sh_1\cup\Sh_2,\Smaller_2)$ is valid in this case.
			
			Now, if $h>0$, then for some role name $r$, a decomposition variable with this role is in $\Sh_1$.
			Because $(\Sh_1, \Smaller_1)$ is valid, there is a valid shortcut $(\Sh_1', \Smaller_1')$ that resolves it \wrt the role $r$.
			
			We have two cases here. 
			If there is no decomposition variable with $r$ in $\Sh_2$, then $(\Sh_1 \cup \Sh_2, \Smaller_2)$ is resolved \wrt the role $r$, with 
			the same shortcut: $(\Sh_1', \Smaller_1')$. Hence we are done for this role name.
			
			If a decomposition variable with $r$ is in $\Sh_2$, by validity of $(\Sh_2, \Smaller_2)$, there is a valid shortcut $(\Sh_2', \Smaller_2')$
			that resolves the one \wrt the role $r$.
			
			Now we compare two shortcuts:  $(\Sh_1', \Smaller_1')$ that resolves  $(\Sh_1, \Smaller_1)$ and $(\Sh_2', \Smaller_2')$ that resolves $(\Sh_2, \Smaller_2)$
			for the same role $r$. Since by assumption $\Smaller_1 = \Sh_2 \cup \Smaller_2$, we see that $\Smaller_1' = \Sh_2' \cup \Smaller_2'$. This follows
			from Definition~\ref{definition:resolve}. 
			Hence by induction $(\Sh_1' \cup \Sh_2', \Smaller_2')$ is a valid shortcut. This shortcut resolves  $(\Sh_1\cup\Sh_2,\Smaller_2)$  \wrt the role $r$ as required.	
		\end{proof}
		
		We use the above lemma in proving the following fact about level-one shortcuts.
		
		\begin{lemma}\label{lemma:level-one-validity}
			Let $(\Sh',\Smaller')$ be the level-one shortcut for $(\Sh, \Smaller)$. 
			If  $(\Sh, \Smaller)$ is valid, then  $(\Sh',\Smaller')$ is also valid.
		\end{lemma}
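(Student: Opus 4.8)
The plan is to argue by induction on $|\Smaller \setminus S_{ini}^\bot|$, the number of non-bottom variables in the prefix part, repeatedly collapsing one supporting shortcut into the main part by means of Lemma~\ref{lemma:adding-shortcuts}.

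In the base case $\Smaller \subseteq S_{ini}^\bot$ the construction leaves the shortcut unchanged ($\Sh' = \Sh$, $\Smaller' = \Smaller$), so $(\Sh', \Smaller')$ is just $(\Sh, \Smaller)$ and is valid by hypothesis. For the inductive step suppose $\Smaller \not\subseteq S_{ini}^\bot$, hence $\Smaller \neq \emptyset$. I would first extract two facts from Definition~\ref{definition:shortcut}. Since $\Smaller \neq \emptyset$, Condition~\ref{definition:shortcut:condition-bottom} rules out any $\bot$-variable in $\Sh$, so $\Sh \cap S_{ini}^\bot = \emptyset$; this also makes the two parsings of the construction formula coincide, giving $\Sh' = (\Sh \cup \Smaller) \setminus S_{ini}^\bot$ and $\Smaller' = \Smaller \cap S_{ini}^\bot$. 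Since $(\Sh, \Smaller)$ is valid with $\Smaller \neq \emptyset$, the second clause of Definition~\ref{definition:valid} supplies a valid shortcut $(\Sh_1, \Smaller_1)$ on which it depends, so that $\Smaller = \Sh_1 \cup \Smaller_1$.

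Now Lemma~\ref{lemma:adding-shortcuts}, applied to the valid shortcuts $(\Sh, \Smaller)$ and $(\Sh_1, \Smaller_1)$ with $\Smaller = \Sh_1 \cup \Smaller_1$, yields that $(\Sh \cup \Sh_1, \Smaller_1)$ is valid. I would then verify that this merged shortcut has strictly smaller measure and the same level-one shortcut as $(\Sh, \Smaller)$. For the measure, $\Sh_1 \not\subseteq S_{ini}^\bot$: otherwise, as $\Sh_1 \neq \emptyset$, Condition~\ref{definition:shortcut:condition-bottom} would force $\Smaller_1 = \emptyset$ and hence $\Smaller = \Sh_1 \subseteq S_{ini}^\bot$, contradicting the case assumption; the same condition then gives $\Sh_1 \cap S_{ini}^\bot = \emptyset$. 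Using $\Sh_1 \subseteq \Smaller$ and $\Sh_1 \cap \Smaller_1 = \emptyset$ (Condition~\ref{definition:shortcut:condition-disjoint}), the set $\Smaller \setminus S_{ini}^\bot$ is the disjoint union of $\Sh_1$ and $\Smaller_1 \setminus S_{ini}^\bot$, so $|\Smaller_1 \setminus S_{ini}^\bot| < |\Smaller \setminus S_{ini}^\bot|$. For the coincidence of level-one shortcuts, $\Sh_1 \cap S_{ini}^\bot = \emptyset$ yields $\Smaller_1 \cap S_{ini}^\bot = \Smaller \cap S_{ini}^\bot = \Smaller'$, while $(\Sh \cup \Sh_1) \cap S_{ini}^\bot = \emptyset$ yields $((\Sh \cup \Sh_1) \cup \Smaller_1) \setminus S_{ini}^\bot = (\Sh \cup \Smaller) \setminus S_{ini}^\bot = \Sh'$. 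Thus the level-one shortcut for $(\Sh \cup \Sh_1, \Smaller_1)$ is exactly $(\Sh', \Smaller')$, and the induction hypothesis applied to $(\Sh \cup \Sh_1, \Smaller_1)$ finishes the step.

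I expect the main obstacle to be purely organizational: keeping track of how the bottom variables behave under repeated merging. The crucial point is that a $\bot$-variable can never sit in the main part of a nontrivial shortcut (Condition~\ref{definition:shortcut:condition-bottom}), which is what guarantees both that each merge absorbs only non-$\bot$ prefix variables—driving the induction—and that the $\bot$-variables of $\Smaller$ remain exactly in the prefix $\Smaller'$. As a minor side task I would confirm that $(\Sh', \Smaller')$ really is a shortcut (notably $\Sh' \neq \emptyset$ and $\Sh' \not\subseteq S_{ini}^\bot$, which follow from $\Sh \neq \emptyset$ and $\Sh \cap S_{ini}^\bot = \emptyset$), exactly as in the opening of the proof of Lemma~\ref{lemma:adding-shortcuts}.
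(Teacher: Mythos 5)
Your proposal is correct and takes essentially the same route as the paper: an induction that collapses the dependence chain one supporting shortcut at a time, with Lemma~\ref{lemma:adding-shortcuts} doing all the work. The only differences are cosmetic---the paper inducts on the length of the dependence path and applies the induction hypothesis to the supporter $(\Sh_1,\Smaller_1)$ first, then merges $(\Sh,\Smaller)$ with the supporter's level-one shortcut, whereas you merge with the supporter itself and then recurse on the merged shortcut using $|\Smaller\setminus S_{ini}^\bot|$ as the measure; your additional bookkeeping about $\bot$-variables (showing $\Sh_1\cap S_{ini}^\bot=\emptyset$, the strict decrease of the measure, and the coincidence of level-one shortcuts) is sound and makes explicit some details the paper leaves implicit.
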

		
%
		\begin{proof} We will use an induction on  $l$, the length of the path of corresponding shortcuts starting with
			$(\Sh, \Smaller)$. (Let us notice that each such path has to terminate with a shortcut with
			the empty prefix part.)
			
			The base case is when $l=0$. Hence $(\Sh, \Smaller)$ has the form $(\Sh, \emptyset)$. It is its own
			level-one shortcut. Since it is valid by assumption,
			the lemma statement is true.
			
			$l >0$.
			$(\Sh, \Smaller)$ is valid and it depends on a valid shortcut $(\Sh_1, \Smaller_1)$.
			For $(\Sh_1, \Smaller_1)$ the path of corresponding shortcuts is $l-1$.
			
			Hence by induction the level-one shortcut $(\Sh_1', \Smaller_1')$  for $(\Sh_1, \Smaller_1)$ is valid.

			Now we notice that the shortcuts
			$(\Sh, \Smaller)$ and $(\Sh_1, \Smaller_1)$  satisfy the conditions in Lemma~\ref{lemma:adding-shortcuts}, namely
			$\Smaller = \Sh_1' \cup \Smaller'_1$.
			Hence, by Lemma~\ref{lemma:adding-shortcuts} $(\Sh \cup \Sh_1' , \Smaller'_1)$ is a valid shortcut.

			But this  shortcut is the same as the level-one
			shortcut for $(\Sh, \Smaller)$, namely $(\Sh \cup \Sh_1' , \Smaller'_1)= (\Sh',\Smaller')$.
			Hence we have that the level-one shortcut for $(\Sh, \Smaller)$ is valid as required.	
		\end{proof}
Thanks to Lemma~\ref{lemma:adding-shortcuts} we can also prove the following.

		\begin{lemma}\label{lemma:corresponding}
			Let $(\Sh_1, \Smaller_1)$ be resolved with $(\Sh_2, \Smaller_2)$ \wrt role name $r$, $\Smaller_1\not=\emptyset, \Smaller_2\not=\emptyset$ and both these shortcuts are valid.
			Then either $\Smaller_1$ does not contain any $r$-decomposition variable or there is a shortcut $s_3$  supporting  $(\Sh_1, \Smaller_1)$ and $s_4$  supporting
			$(\Sh_2, \Smaller_2)$ such that $s_3$ is resolved with $s_4$ \wrt the role $r$.
		\end{lemma}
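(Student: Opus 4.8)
The plan is to reduce to the generic resolving case, split on whether $\Smaller_1$ contains an $r$-decomposition variable, and in the nontrivial case build the two supporting shortcuts explicitly, using Lemma~\ref{lemma:adding-shortcuts} as the key tool to place the relevant parent variable in the right position. First I record two easy reductions. Since $\Smaller_1\neq\emptyset$ and $(\Sh_1,\Smaller_1)$ is a shortcut, condition~\ref{definition:shortcut:condition-bottom} of Definition~\ref{definition:shortcut} forces $\Sh_1\cap S_{ini}^\bot=\emptyset$, and likewise $\Sh_2\cap S_{ini}^\bot=\emptyset$; hence $s_1\stackrel{r}{\to}s_2$ is a \emph{restricted} resolution (case~1 of Definition~\ref{definition:resolve}) and I may work with conditions \ref{resolve:1}--\ref{resolve:4} only. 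If $\Smaller_1$ contains no $r$-decomposition variable we are in the first disjunct and there is nothing to prove, so assume it contains one, say $Y_0^r$. By condition~\ref{resolve:2} its parent $Z_0:=Y_0$ lies in $\Smaller_2$, and $Z_0^r$ is defined. Combining \ref{resolve:2} with \ref{resolve:4} gives the fact I will use repeatedly: the set of $r$-decomposition variables occurring in $\Smaller_1$ is exactly $\{Z^r\mid Z\in\Smaller_2,\ Z^r\text{ defined}\}$.

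The central step is to choose the supporting shortcut $s_4$ of $(\Sh_2,\Smaller_2)$ so that $Z_0$ lands in its \emph{main} part. Validity of $(\Sh_2,\Smaller_2)$ and $\Smaller_2\neq\emptyset$ yield a valid support $(\Sh^{(1)},\Smaller^{(1)})$ with $\Sh^{(1)}\cup\Smaller^{(1)}=\Smaller_2$. If $Z_0\in\Smaller^{(1)}$ I descend: this shortcut depends on a valid support $(\Sh^{(2)},\Smaller^{(2)})$ of $\Smaller^{(1)}$, and Lemma~\ref{lemma:adding-shortcuts} merges them into the valid shortcut $(\Sh^{(1)}\cup\Sh^{(2)},\Smaller^{(2)})$, whose union is again $\Smaller_2$, i.e.\ still a support of $s_2$. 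Iterating this ``promotion'' and using that every chain of dependencies terminates, I obtain a valid support $s_4=(\Sh_4,\Smaller_4)$ of $(\Sh_2,\Smaller_2)$ with $Z_0\in\Sh_4$. This is the essential use of Lemma~\ref{lemma:adding-shortcuts}.

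With $s_4$ fixed I set $\Sh_3:=\{Z^r\mid Z\in\Sh_4,\ Z^r\text{ defined}\}$ and $\Smaller_3:=\Smaller_1\setminus\Sh_3$. By the fact isolated above $\Sh_3\subseteq\Smaller_1$, so $(\Sh_3,\Smaller_3)$ partitions $\Smaller_1$ and hence supports $(\Sh_1,\Smaller_1)$, while $Z_0^r\in\Sh_3$ makes $\Sh_3$ nonempty. The resolution $s_3\stackrel{r}{\to}s_4$ is then routine from Definition~\ref{definition:resolve}: \ref{resolve:1} holds since $Z_0^r$ witnesses an $r$-decomposition variable in $\Sh_3$ and, by construction, every such variable in $\Sh_3$ has its parent in $\Sh_4$; \ref{resolve:3} holds by the definition of $\Sh_3$; and \ref{resolve:2}, \ref{resolve:4} hold because the $r$-decomposition variables remaining in $\Smaller_3$ are exactly those whose parents lie in $\Smaller_4=\Smaller_2\setminus\Sh_4$.

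The main obstacle is to check that $(\Sh_3,\Smaller_3)$ really is a shortcut in the sense of Definition~\ref{definition:shortcut}: disjointness and nonemptiness are immediate, but condition~\ref{condition:definition:shortcut}, that it satisfy all flat subsumptions, is not mere bookkeeping. The idea is that a flat subsumption over $r$-decomposition variables is the $r$-decomposition of one over their parents, so that satisfaction by $s_4$ (and by $s_1$) should transfer to $(\Sh_3,\Smaller_3)$; making this correspondence precise, and in particular controlling the variables of $\Smaller_1$ that are \emph{not} of the form $Z^r$ (which I place into $\Smaller_3$), is where the real work lies. Finally I would dispatch the degenerate situations in which a $\bot$-variable appears among the $Z_0^r$ — so that $\Sh_3$ could meet $S_{ini}^\bot$ — by the special treatment of bottom shortcuts noted after Definition~\ref{definition:resolve}: their prefix parts are empty, the partition of $\Smaller_1$ degenerates, and the claim is verified directly rather than through the construction above.
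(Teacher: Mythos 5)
Your construction of $s_4$ is sound, and it is in fact the same move the paper makes: start from an arbitrary valid support of $(\Sh_2,\Smaller_2)$ and repeatedly merge it with a support of its own prefix part via Lemma~\ref{lemma:adding-shortcuts} until the designated parent variable sits in the main part. The gap is in $s_3$. You define $\Sh_3:=\{Z^r\mid Z\in\Sh_4,\ Z^r\text{ defined}\}$ and $\Smaller_3:=\Smaller_1\setminus\Sh_3$ from scratch, and then must prove that this pair is a shortcut at all, i.e.\ that it satisfies every flat subsumption (condition~\ref{condition:definition:shortcut} of Definition~\ref{definition:shortcut}); note that ``supporting'' is a relation between shortcuts, so without this the conclusion of the lemma is not even well-formed. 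You acknowledge this is ``where the real work lies'' but do not do it, and the suggested repair does not go through: flat subsumptions are arbitrary subsumptions among variables, with no guaranteed correspondence between subsumptions over $r$-decomposition variables and subsumptions over their parents. Concretely, for $X\in\Smaller_3$ the shortcut property of $(\Sh_1,\Smaller_1)$ only yields a left-hand variable in $\Smaller_1=\Sh_3\cup\Smaller_3$, and nothing prevents that variable from lying in $\Sh_3$, in which case the requirement that some left-hand variable lie in $\Smaller_3$ fails; a flat subsumption $W\sqsubseteq^? V$ whose only left-hand variable is $W=Z^r$ for some $Z\in\Sh_4$, with $V\in\Smaller_1$ not of this form, is exactly such a configuration. (Your closing remark on the degenerate case where $\Sh_3$ meets $S_{ini}^\bot$ is likewise deferred rather than proved.) Your verification of the resolving conditions \ref{resolve:1}--\ref{resolve:4} for the pair is correct, but it rests on an object whose existence as a shortcut is unestablished.

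The paper's proof is structured precisely so that this obligation never arises: it takes existing \emph{valid} supports $(\Sh_3,\Smaller_3)$ of $(\Sh_1,\Smaller_1)$ and $(\Sh_4,\Smaller_4)$ of $(\Sh_2,\Smaller_2)$ --- both exist because the two given shortcuts are valid with non-empty prefix parts --- and repairs violations of the resolving conditions symmetrically, promoting on whichever side the violation occurs (a parent $X$ missing from $\Sh_4$, or a decomposition variable $X^r$ missing from $\Sh_3$) by merging that support with a support of its own prefix part via Lemma~\ref{lemma:adding-shortcuts}, with induction on $|\Smaller_3|+|\Smaller_4|$. Every object manipulated there is already a valid shortcut, so shortcut-hood and validity come for free. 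To fix your proof, replace the explicit definition of $s_3$ by this two-sided repair: run your promotion argument simultaneously on a valid support of $(\Sh_1,\Smaller_1)$, promoting $Z^r$ into its main part whenever $Z$ is promoted into $\Sh_4$, and terminate by the same decreasing measure.
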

		
		\begin{proof}
			Since $(\Sh_1, \Smaller_1)$ and $(\Sh_2, \Smaller_2)$ are valid, there exist
			supporting shortcuts $(\Sh_3, \Smaller_3)$ and $(\Sh_4, \Smaller_4)$.

			\begin{tikzpicture}
				\matrix (m3) [matrix of math nodes, row sep=3em, column sep=4em, minimum width=2em]
				{X \in \Smaller_2 &  (\Sh_2, \Smaller_2) & (\Sh_4, \Smaller_4)\\
					X^r \in \Smaller_1 & (\Sh_1,\Smaller_1) &  (\Sh_3, \Smaller_3)\\};
				\path[-stealth]
				(m3-2-2) edge node [left] {$r$} (m3-1-2)
				(m3-2-3) edge node [left] {$r$} (m3-1-3)
				;
				\path[ dotted]
				(m3-1-1) edge  (m3-1-2)
				(m3-2-1) edge  (m3-2-2);
				\path[ -stealth, dotted]
				(m3-2-2) edge (m3-2-3)
				(m3-1-2) edge (m3-1-3)
				;
			\end{tikzpicture}

			The proof will use induction on the size of prefix parts of the supporting shortcuts $|\Smaller_3| + |\Smaller_4|$.
			
			Assume that there is $X^r \in \Smaller_1$. By Definition~\ref{definition:resolve}, $X \in \Smaller_2$.
			
			If $\Smaller_3$ and $\Smaller_4$ are empty, then 
			all conditions of Definition~\ref{definition:resolve} are satisfied and $(\Sh_3, \Smaller_3)$ is resolved 
			with $(\Sh_4,\Smaller_4)$ \wrt to the role name $r$ as required.
			
			We have the following cases where the conditions required for shortcuts $(\Sh_3, \Smaller_3)$, $(\Sh_4, \Smaller_4)$ may be violated.
			\begin{itemize}
				\item $X^r \in \Sh_3$, but $X \not\in \Sh_4$ ($X \in \Smaller_4$).
				In this case consider a supporting shortcut corresponding to $\Smaller_4$: $(\Sh_5, \Smaller_5)$.
				By Lemma~\ref{lemma:adding-shortcuts}, the shortcut $(\Sh_4 \cup \Sh_5, \Smaller_5)$ is a valid shortcut
				and it is supports $(\Sh_1, \Smaller_1)$. Since size of $\Smaller_5$ is strictly smaller than the size of
				$\Smaller_4$, by induction we have that there are shortcuts $s_3$ and $s_4$ such that $s_3$ is resolved by $s_4$ \wrt the role $r$.
				
				\item $X \in \Sh_4$ and $X^r$ is defined, but $X^r \not\in \Sh_3$ ($X^r \in \Smaller_3$).
				In this case consider a shortcut supporting $\Smaller_3$: $(\Sh_6, \Smaller_6)$.
				By Lemma~\ref{lemma:adding-shortcuts}, the shortcut $(\Sh_3\cup\Sh_6, \Smaller_6)$ is a valid shortcut and 
				it supports $(\Sh_2, \Smaller_2)$. Since the size of $\Smaller_6$ is strictly smaller than the size of 
				$\Smaller_3$, by induction we have that there are shortcuts $s_3$ and $s_4$ such that $s_3$ is resolved by $s_4$ \wrt the role $r$.
			\end{itemize}
			
		\end{proof} 
	\section{Computing shortcuts}\label{section:main}
	Here we present Algorithm~\ref{algorithm:main-new} used to solve a normalized unification problem followed by explanations of its sub-procedures.
		
		\begin{algorithm}[h]
			\caption{Main}\label{algorithm:main-new}
			\begin{algorithmic}[1]
				\Procedure{Main}{$ \Gamma$} \Comment{$\Gamma$ is a normalized unification problem defined \wrt current choices for variables}
				\State $\pred$ is a set of all variables in $\Gamma$
				\State  $\Sh^\bot_{ini} = \{X \in \pred \mid X \sqsubseteq^? \bot \in \Gamma\}$
				\State For each constant $A$, $\Sh^A_{ini} = \{X \in \pred \mid X \sqsubseteq^? A \in \Gamma\}$ 
				\State $\Flat \gets $ flat subsumptions of $\Gamma$
				
				\State\label{main:all-shortcuts-new} $\s \gets$ \Call{AllShortcuts}{$\Gamma$}
				
				\If{$\Sh_{ini}^\bot == \emptyset$}\Comment{\flo-unification}
				
				\If{$\inicons \in \s$, for each constant $A$}\label{algorithm:membership-check1}
				\State \Return success
				\EndIf	
				
				\ElsIf{For each constant $A$, $\Sh_{ini}^A == \emptyset$} \Comment{\pure-unification}
				\If{$\inibot\not\in \s$} 
				\State \Return failure  
				\Else
				\Repeat\label{main:repeat1-new}
				\State $\s \gets$\Call{CheckExistence}{\s}
				\State $\s \gets $\Call{CheckValidity}{$\s$}
				\Until{there is no change in \s}
				\If{$\inibot\in \s$}\label{algorithm:membership-check2} 
				\State \Return success\label{main:success-pure}
				\EndIf
				\EndIf
				\Else \Comment{Full \flbot-unification}
				\If{$\inibot\not\in \s$ or for a constant $A$, $\inicons \not\in \s$}\label{algorithm:membership-check3} 
				\State \Return failure  
				\Else
				\Repeat\label{main:repeat3-new}
				\State $\s \gets$\Call{CheckExistence}{\s}
				\State $\s \gets $\Call{CheckValidity}{$\s$}
				\Until{there is no change in \s}

				\If{ $\inibot \in \s$ and for each constant $A$, $\inicons \in \s$}
				\State \Return success
				\EndIf
				\EndIf
				\EndIf

				\State \Return failure 
				\EndProcedure
			\end{algorithmic}
		\end{algorithm}

		\textbf{Subprocedure \textsc{AllShortcuts}}{
			This procedure runs through all pairs of subsets of variables in \pred, and for each one:
			\begin{enumerate}
				\item Identifies the candidate shortcuts of height $0$ and verifies whether they satisfy the criteria of a shortcut (Definition~\ref{definition:shortcut}), and
					  of height $0$ (Definition~\ref{definition:height0}).
				The shortcuts that pass the checks are added to the set of computed shortcuts.
				
				\item If there are no shortcuts of height $0$ then the algorithms returns \emph{failure}. Otherwise
				it enters a loop, computing next shortcuts based on the already computed ones.
				For each pair $(\Sh, \Smaller)$ of subsets of variables in \pred that is not yet in the set of computed shortcuts
				it checks conditions of Definition~\ref{definition:shortcut},
					and verifies whether there exists an already computed shortcut that resolves $(\Sh, \Smaller)$, as defined in Definition~\ref{definition:resolve}.

				If a shortcut passes the checks, it is added to the set of computed shortcuts. Note that at this stage, shortcuts of heights greater than zero are also added.
				The loop terminates when no new shortcuts can be computed.
			\end{enumerate}
		}
		
		\textbf{Subprocedure \textsc{CheckExistence}}{
			This procedure takes the set of all shortcuts computed and checks if for each one,
			there is a supporting shortcut in the set. Hence it checks if for each shortcut $s_1$, there is
			a shortcut $s_2$, such that $s_1 \dasharrow s_2$. If no $s_2$ is found,  $s_1$ is removed 
			from the set of all computed shortcuts.
			The execution of this procedure ensures that the conditions of Definition~\ref{definition:valid} are satisfied.
		}
		
		\textbf{Subprocedure \textsc{CheckValidity}}{
		After the deletions performed by \textsc{CheckExistence}, this procedure verifies	for each shortcut $s_1$ whether a shortcut of height $0$, say $s_0$, is reachable from $s_1$
		via the resolving relation, i.e.,  $s_1 \stackrel{*}{\to} s_0$.
			If $s_1$ is not resolved in this way, it is removed from the set of computed shortcuts.
		}

		\section{Termination and complexity}\label{section:termination}
		\begin{theorem}
			Let $\Gamma$ be a normalized \flbot unification problem. 
			Algorithm~\ref{algorithm:main-new} terminates in at most exponential time in the size of $\Gamma$.
		\end{theorem}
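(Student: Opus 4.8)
The plan is to derive the exponential bound from a single counting observation combined with monotonicity of each loop, so that termination and the time bound fall out together. Since a shortcut is a pair $(\Sh,\Smaller)$ of subsets of the variable set \pred, the subprocedure \textsc{AllShortcuts} ranges over all pairs of subsets, of which there are $4^{|\pred|}$; after imposing disjointness (Condition~\ref{definition:shortcut:condition-disjoint} of Definition~\ref{definition:shortcut}) at most $3^{|\pred|}$ survive. Because $\Gamma$ is already normalized, every variable (including decomposition variables) occurs in $\Gamma$, so $|\pred| \le |\Gamma|$ and the total number $N$ of candidate shortcuts is $2^{O(|\Gamma|)}$, i.e.\ exponential in the size of $\Gamma$. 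Each individual test the algorithm performs — checking the conditions of Definition~\ref{definition:shortcut}, testing the resolving relation of Definition~\ref{definition:resolve} against a fixed shortcut, checking a dependency $\dasharrow$, or deciding membership in \s — inspects at most all flat subsumptions in \Flat and all role names in \fun, and so costs $\mathrm{poly}(|\Gamma|)$ time; wherever a test ranges over the already-computed shortcuts it costs an extra factor of at most $N$.

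First I would bound \textsc{AllShortcuts}. Its set of computed shortcuts only grows, and it is bounded above by $N$, so the main loop executes at most $N$ times, since an iteration that adds no new shortcut triggers termination. Each iteration ranges over all $N$ candidate pairs and, for each, checks the shortcut conditions and searches the computed set (of size $\le N$) for a resolving shortcut over each of the $|\fun|$ roles. Thus one iteration costs $N \cdot N \cdot |\fun| \cdot \mathrm{poly}(|\Gamma|)$, and the whole procedure costs $\mathrm{poly}(N) \cdot \mathrm{poly}(|\Gamma|) = 2^{O(|\Gamma|)}$. Computing the height-$0$ shortcuts at the start is the same scan without the resolving search, hence also within this bound.

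Next I would bound the refinement loops (lines~\ref{main:repeat1-new} and~\ref{main:repeat3-new}). Here the key is the opposite monotonicity: both \textsc{CheckExistence} and \textsc{CheckValidity} only \emph{remove} shortcuts from \s. The procedure \textsc{CheckExistence} scans the $\le N$ shortcuts and for each searches for a supporting shortcut, costing $\mathrm{poly}(N)$; \textsc{CheckValidity} is a reachability test $s_1 \stackrel{*}{\to} s_0$ in the resolving-relation graph on $\le N$ nodes, also $\mathrm{poly}(N)$. Since each pass that changes \s strictly shrinks it, the \textbf{repeat}-loop stabilises after at most $N$ passes, for a total of $\mathrm{poly}(N) = 2^{O(|\Gamma|)}$. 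The surrounding membership tests for \inibot and \inicons, and the case split on $\Sh^\bot_{ini}$ and $\Sh^A_{ini}$, are clearly within this bound.

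Summing the two dominant contributions, \textsc{AllShortcuts} and the refinement loop, the algorithm halts after $2^{O(|\Gamma|)}$ steps, which is the claimed bound. I expect the only genuine obstacle to lie in the bookkeeping that justifies loop termination rather than in the counting itself: one must state and verify the invariants that \textsc{AllShortcuts} is monotonically increasing, so that its loop ends precisely when a pass produces nothing new, while \textsc{CheckExistence} and \textsc{CheckValidity} are monotonically decreasing, so that the \textbf{repeat}-loop cannot cycle. Once these monotonicity invariants are pinned down, both the termination and the exponential bound follow immediately from $N \le 4^{|\pred|} \le 4^{|\Gamma|}$.
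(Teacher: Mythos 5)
Your proposal is correct and follows essentially the same route as the paper's own proof: both bound the number of candidate shortcuts by an exponential in $|\Gamma|$, observe that \textsc{AllShortcuts} only adds shortcuts while the \textbf{repeat}-loop only deletes them (so each runs at most exponentially many iterations), and multiply by at most exponential per-iteration costs. Your version merely makes explicit the counting ($\le 4^{|\pred|}$ pairs) and the monotonicity invariants that the paper leaves implicit.
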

		
%
%
%
%
		\begin{proof}
		We assume that our problem is normalized and the sets of flat and start subsumptions are non-empty.
			
			The algorithm computes all shortcuts (line \ref{main:all-shortcuts-new}). This step is  exponential, because there are only exponentially many possible shortcuts. 
			
			Now the algorithm detects one of the 3 cases:
			\flo-unification case, 	\pure-unification case, or full \flbot case.
			
			The first case is the simplest one, but otherwise they contain similar steps.
			
			If a required initial shortcut is not computed, the algorithm terminates with failure. 
			
			Otherwise, in the first case it returns success, in the two remaining cases, it proceeds to the next step.
			
			The next repeat-loop (line~\ref{main:repeat1-new},  \ref{main:repeat3-new}) causes deletions in the set of already computed shortcuts and hence it can be executed only at most exponentially many times. Each time it has to perform two sequential checks, each of them costing exponentially many  steps.
			
			After the loop terminates, the algorithm terminates.
			Hence overall the time needed for the algorithm to terminate is exponential in the size of the problem.
		\end{proof}

		\section{Soundness}\label{section:soundness}
		
		
		
		\begin{theorem}\label{theorem:soundness}
			Let $\Gamma$ be the normalized unification problem and 
			let the main algorithm  on $\Gamma$  terminate with success, then there exists
			a ground \pure-unifier $\gamma$ of $\Gamma$.
		\end{theorem}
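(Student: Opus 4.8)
The plan is to read off, from a successful run, the set of valid shortcuts that the algorithm has computed and to use it as a blueprint for placing $\bot$-particles into the variables so that every start, increasing, and flat subsumption of the normalized goal is satisfied and the decreasing rule holds. First I would record what ``success'' delivers: since we are in the \pure-unification case, there are no constants, so every particle produced will be a $\bot$-particle $\forall v.\bot$ and the resulting $\gamma$ will be a ground \pure-substitution once I argue finiteness. The closure performed by the repeat-loop through \textsc{CheckExistence} and \textsc{CheckValidity} guarantees that the surviving set $\s$ consists of shortcuts satisfying Definition~\ref{definition:valid}, that it is acyclic under the resolving and dependency relations, and that $\inibot=(S_{ini}^\bot,\emptyset)\in\s$. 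Acyclicity will bound the length of the role strings $v$ and hence keep all particle sets finite.

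The construction propagates particles along the resolving graph, seeded by $\inibot$. I let $\inibot$ place the particle $\bot$ into every variable of its main part $S_{ini}^\bot$; in the pure case this is the only seed needed. I then propagate: whenever a valid shortcut $s$ has been made to emit a particle $P$ into its main part and $s\stackrel{r}{\to}s'$ is a resolving edge guaranteed by validity for a role $r$ among the decomposition variables of $s$, I let $s'$ emit $\forall r.P$ into its main part. By Definition~\ref{definition:resolve}(\ref{resolve:1},\ref{resolve:3}) the main part of $s'$ consists exactly of the parents of the $r$-decomposition variables of $s$, so this is precisely the placement forced by the increasing subsumption $X\sqsubseteq^?\forall r.X^r$. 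To treat prefix parts uniformly I would first replace each shortcut by its level-one shortcut, which is again valid by Lemma~\ref{lemma:level-one-validity}; this forces $\Smaller\subseteq S_{ini}^\bot$, so every prefix-part variable is already a $\bot$-variable receiving $\bot$ from the seed. Finally I set $\gamma(X)$ to be the reduced conjunction of all particles emitted into $X$, assigning $\top$ to any variable receiving none.

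It then remains to verify the four obligations. The start subsumptions $X\sqsubseteq^?\bot$ hold because $\inibot$ forces $\bot$ into each $X\in S_{ini}^\bot$ and $\bot$ absorbs all other particles, so $\gamma(X)\equiv\bot$. The increasing subsumptions and the decreasing rule both follow from the fact that the propagation step is exact: $\forall r.P\in\gamma(X)$ is emitted precisely when $P$ is emitted into an $r$-decomposition variable whose parent is $X$, so $\forall r.\gamma(X^r)\subseteq\gamma(X)$ (yielding $X\sqsubseteq\forall r.X^r$) and conversely $\forall r.P\in\gamma(X)$ forces $P\in\gamma(X^r)$ (the decreasing rule). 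For a flat subsumption $Y_1\sqcap\cdots\sqcap Y_n\sqsubseteq^? X$ I would show, for each particle $\forall v.\bot$ placed in $X$, that some $Y_i$ receives $\forall v.\bot$ or a proper $\bot$-prefix of it, whence $\bigsqcap_i\gamma(Y_i)\sqsubseteq\forall v.\bot$ by Lemma~\ref{lemma:characterization}. The base case $v=\epsilon$ is exactly the shortcut-satisfaction condition for $\inibot$.

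The main obstacle is the inductive step for flat subsumptions with deep particles ($|v|\ge 1$). When $\forall rv.\bot$ is placed in $X$ through a shortcut $s'$ resolving $s$ with respect to $r$, one must locate, for the witnessing $Y_i$ demanded by the satisfaction condition at $s'$, a matching witness one role deeper guaranteeing that $Y_i$ (or a prefix of it) actually receives the shorter particle $\forall v.\bot$ compatibly. This is where the structural lemmas are used: Lemma~\ref{lemma:adding-shortcuts} lets me absorb a supporting shortcut into the main part so that a witness lies in the main rather than the prefix part, Lemma~\ref{lemma:level-one-validity} confines prefix parts to $\bot$-variables so their obligations are already discharged by $\inibot$, and Lemma~\ref{lemma:corresponding} supplies the resolved pair of supporting shortcuts needed to invoke the induction hypothesis at role depth one lower. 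Assembling these into a single induction on $|v|$, and, within a level, on the combined size of the prefix parts as in Lemma~\ref{lemma:corresponding}, is the delicate part; once it is in place, the three subsumption types and the decreasing rule are all discharged and $\gamma$ is the desired ground \pure-unifier of $\Gamma$.
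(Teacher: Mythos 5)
Your construction follows the same skeleton as the paper's proof: take the acyclic graph of computed shortcuts, seed $\bot$ at \inibot, propagate $\forall r.P$ along resolving edges to deactivate particles sitting in decomposition variables, and verify the start, increasing and flat subsumptions plus the decreasing rule, with Lemmas~\ref{lemma:adding-shortcuts}, \ref{lemma:level-one-validity} and \ref{lemma:corresponding} doing the prefix work. However, your device of replacing every shortcut by its level-one shortcut creates a genuine gap precisely where you declare the obligations discharged. You justify the increasing subsumptions and the decreasing rule by ``exactness'' of the propagation, i.e.\ by conditions~(\ref{resolve:1}) and~(\ref{resolve:3}) of Definition~\ref{definition:resolve}. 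These conditions hold along a genuine resolving edge $t_1 \stackrel{r}{\to} u$ supplied by validity, but they fail after you replace $u$ by its level-one shortcut $t_2$: a variable $Z \in \Smaller_u \setminus S_{ini}^\bot$ that gets promoted into the main part of $t_2$ may have $Z^r$ defined, and then condition~(\ref{resolve:4}) for $t_1 \to u$ places $Z^r$ in $\Smaller_{t_1} \subseteq S_{ini}^\bot$, not in $\Sh_{t_1}$, so condition~(\ref{resolve:3}) is violated between $t_1$ and $t_2$. Your construction then puts $\forall r.P$ into $\gamma(Z)$ while $\gamma(Z^r)=\{\bot\}$: the decreasing rule fails literally ($P \notin \gamma(Z^r)$), and the increasing subsumption $Z \sqsubseteq \forall r.Z^r \equiv \forall r.\bot$ is not witnessed by the emitted particle. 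The situation is repairable—the propagation out of \inibot also places $\forall r.\bot$ into $Z$ by condition~(\ref{resolve:1}), and after reduction $\forall r.\bot$ absorbs $\forall r.P$—but that repair is exactly the paper's bottom-up ``Condition'' (prefixes must be created in supporting shortcuts before deeper particles are created above them), i.e.\ the mechanism your level-one replacement was meant to bypass. In short: with general shortcuts you get exactness but must argue prefix creation (the paper's route); with level-one shortcuts you get the flat invariant for free but lose exactness; your proposal silently uses both at once.

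Two smaller divergences from the paper. First, you restrict to the \pure{} branch (``there are no constants''), whereas the paper's Step~0 also seeds \inicons{} with the constant $A$, so that the same propagation covers success in the full \flbot{} branch; since the theorem is invoked for any successful run of Main, that case must be handled, the loose wording ``\pure-unifier'' in the statement notwithstanding. Second, your claim that the main part of $s'$ ``consists exactly of the parents of the $r$-decomposition variables of $s$'' overstates Definition~\ref{definition:resolve}: $\Sh_2$ may contain further variables (those with no $r$-decomposition variable); this is harmless for the propagation itself, but it is the same over-reading of exactness that produces the gap above.
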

		
		\begin{proof}
			
			We assume that the algorithm terminated with success. 
%
			Since the problem is normalized  and the algorithm has run on  $\Gamma$, we can assume that the set of start subsumptions  and 
			the set \Flat of flat subsumptions is not empty.
			
			The algorithm has computed shortcuts \s in a special order, starting from the shortcuts of height $0$. We can see this as if the algorithm has produced a directed acyclic graph
			defined on the set of all computed shortcuts. The graph is $(V, E)$, where $V = \s$, and $E$ -- the set of edges given by the resolving relation (Definition~\ref{definition:resolve}) between shortcuts.
			
			The algorithm provides additional connections between the corresponding shortcuts in the following way:
			if $s=(\Sh, \Smaller)$ and $s' = (\Sh', \Smaller')$, where $\Smaller = \Sh' \cup \Smaller'$, then we have an additional edge $s \dotarrow{} s'$. A dotted edge indicates where the  $\bot$-particles in a prefix-part of a shortcut, should be created.
			
			Now let us take a  connected sub-graph of this construction containing the initial shortcuts.
			We show that a unifier can be constructed based on this graph.
			The construction begins with the substitution $\gamma_0 = \{X \mapsto \top \mid X \in \pred\}$  and proceeds by extending it with particles, until a unifier is obtained.
			For the description we use the following notions.
			
				We say that a ground particle $P$ of the role depth $i$ \emph{is created in a shortcut} $S = (\Sh,\Smaller)$ at step $i$,
				if an extended substitution $\gamma_i$ is defined such that,  for every variable $X \in \Sh$, 
				$\gamma_i(X) = \gamma_{i-1}(X) \cup \{P\}$.	
				%
				%
				If $X^r \in \Sh$, then we say that $P$ is \emph{active} in $S$ at step $i$.
				\emph{Step $i$} of the construction consists of the creation of all the particles of the role depth $i$.
			%
			Hence, the construction of a unifier consists of \emph{deactivating} active particles of role depth $i-1$.

			\textbf{Invariant}
				The invariant of the construction is that at any time the current substitution satisfies
				the flat subsumptions of $\Gamma$.

			The invariant is obviously true for the initial substitution $\gamma_0$.
		%
			The following condition has to be satisfied for the invariant to be true at any step.

			\textbf{Condition}\label{soundness:condition}
				Let  $S = (\Sh,\Smaller)$ be a shortcut and $S' = (\Sh', \Smaller')$ be a shortcut corresponding to $\Smaller$,
				\ie $\Smaller = \Sh' \cup \Smaller'$.
				
				We allow a particle $P$ to be created in $S$ at step~$i$ only if a prefix of $P$ was already created in $S'$ at a step smaller than~$i$.

			The above Condition forces us to construct a solution in a \emph{bottom-up} way.
			
			\textbf{Steps of the construction}
			\begin{enumerate}
				\item (Step 0)
				In order to satisfy the start subsumptions, we create the bottom particle $\bot$ in the initial shortcut $\inibot = (\Sh^\bot_{ini}, \emptyset)$ and likewise,
				we  initialize the $A$-variables in $\inicons = (\Sh^A_{ini}, \emptyset)$.
				Observe that such an extension of $\gamma_0$ to $\gamma_1 = \{[X \mapsto \gamma_0(X)\cup \{\bot\}] \mid X \in S^\bot_{ini}\}
				\cup \{[X \mapsto \gamma_0(X)\cup \{A\}] \mid X \in S^A_{ini}\}$ satisfies the invariant.
				Some particles become active at this step in \inibot or \inicons. If there are no active particles, we terminate
				with success. The current substitution is a unifier.

				\item 	(Step 1) In order to \emph{deactivate} the active particles, we follow the labeled arrows in the graph of shortcuts, representing how the shortcuts are resolved with each other.\\
				

				At this step one can choose to create arbitrarily particles of the form $\forall r.\bot$ in level-one shortcuts of the form $(\Sh, \emptyset)$, but only if no variable in $\Sh$ has $r$-decomposition variable defined.
				In other words, we can extend $\gamma(Z)$ with $\forall r.\bot$, where $Z \in \Sh$, only if $Z^r$ is not defined. If $Z^r$ is defined, the shortcut   $(\Sh, \emptyset)$ cannot be used for creation of $\forall r.\bot$.
				Otherwise, the decreasing rule could be violated by such an extension.

				\item 	(Step i+1) Now assume that the substitution $\gamma$ is defined for the variables, assigning to them sets of particles of the role depth at most $i$.  Let us assume that $s = (\Sh, \Smaller)$ and a particle $P$ of
				depth $i$ is active in $s$. Hence $P$ was created in $s$ at step $i$ and it is added to the substitution
				for a decomposition variable $X^r$ in $\Sh$. ($P \in \gamma_i(X^r)$)
				Since the particles of depth $i+1$ are not yet created, the increasing subsumption: $X \sqsubseteq^? \forall r.X^r$ is not satisfied.
				
				The algorithm gives us an arrow from $s$ to $s'=(\Sh', \Smaller')$ with the label $r$, $s \stackrel{r}{\to} s'$.
				Hence $X \in \Sh' $. We create a particle $\forall r.P$ in $s'$, extending the substitution to $\gamma_{i+1}(Y) = \gamma_i(Y) \cup \{\forall r.P\}$ for every $Y \in \Sh'$.
				In this way the particle $P$ is deactivated in $s$ and the increasing subsumption  $X \sqsubseteq^? \forall r.X^r$ is satisfied \wrt to this particle ($\forall r.P \in \gamma_{i+1}(X)$ and $P \in \gamma_{i+1}(X^r)$).		
			\end{enumerate}\end{proof}	
		The following example illustrates a graph of shortcuts. One can construct a unifier using the construction
		in the proof of Theorem~\ref{theorem:soundness}.
		\begin{example}
		Let the normalized goal be the following.
		
		Flat subsumptions:
		 $\{ Y^r \sqsubseteq^? Z, U\sqcap Y \sqsubseteq^? Z^s\}$,\\
		Start subsumptions: $\{Y^r \sqsubseteq^? A, U^{rs} \sqsubseteq^? \bot\}$.
		
		The algorithm yields the following shortcuts:
		\inibot = $(\{U^{rs}\}, \emptyset)$, \inicons = $(\{Y^r\}, \{U^{rs}\})$,
		 $s_1=(\{U^r, Y^r\}, \{U^{rs}\})$, $s_2 = (\{U, Y, Z^s\}, \emptyset)$, $s_3 = (\{Z\}, \{Y^r, U^{rs}\})$. Notice that $s_3$ is of height $0$ and it depends on \inicons.
				\inicons is resolved by $s_2$.
				The graph of resolving relations between the shortcuts is presented below.
%
	
	\begin{tikzpicture}
		\matrix (m0) [matrix of math nodes, row sep=1em, column sep=4em, minimum width=2em]
		{ 
			\inibot & s_1 & s_2 & s_3 \\
			\inicons &     &     &     \\
		};
		
		\path[-stealth]
		(m0-1-1) edge node [above] {$s$} (m0-1-2)
		(m0-1-2) edge node [above] {$r$} (m0-1-3)
		(m0-1-3) edge node [above] {$s$} (m0-1-4)
		(m0-2-1) edge node [below] {$r$} (m0-1-3);
		
		\path[dotted, -stealth]
		(m0-1-2) edge[bend left] (m0-1-1)
		(m0-1-4) edge[bend left] (m0-2-1);
	\end{tikzpicture}%
%
		\end{example}
		Next we prove that Condition of the construction can be satisfied at each step and hence the construction in the proof of Theorem~\ref{theorem:soundness} is always possible.
		
		\begin{lemma}
			Let $P$ be a particle that should be created in $s = (\Sh, \Smaller)$ at some step $i$.
			
			Then the prefixes of $P$ have been created at previous steps at the variables in $\Smaller$, so that 
			Condition of the construction  is satisfied.
		\end{lemma}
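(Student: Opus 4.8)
The plan is to prove the statement by induction on the step $i$ (equivalently, on the role depth of the particle $P$), showing that whenever the construction in the proof of Theorem~\ref{theorem:soundness} is about to create $P$ in $s=(\Sh,\Smaller)$, a suitable prefix of $P$ has already been deposited in a supporting shortcut of $s$ (whose main part is contained in $\Smaller$), so that the Condition is met. Since every particle of positive role depth is produced by a lifting step $s_1\stackrel{r}{\to}s_2$, it suffices to analyse that step.

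For the base case $i=0$, the only particles created are $\bot$ in $\inibot=(S^\bot_{ini},\emptyset)$ and the constant $A$ in $\inicons=(S^A_{ini},\Smaller)$. The particle $\bot$ has no proper prefix and $\inibot$ has an empty prefix part, so the Condition holds vacuously; for $A$ the only prefix is $\bot$ (Definition~\ref{definition:prefix}), and by the choice $\Smaller\subseteq S^\bot_{ini}$ for the initial shortcut of the constant, every variable of $\Smaller$ already receives $\bot$ at step~$0$. This is a genuine boundary case, since the prefix $\bot$ has the same depth as $A$ and so is supplied by the initial setup rather than at a strictly earlier step.

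For the inductive step, suppose $P=\forall r.P_0$ of depth $i+1$ is created in $s'=(\Sh',\Smaller')$ because $P_0$ (of depth $i$) is active in $s=(\Sh,\Smaller)$ at an $r$-decomposition variable $X^r\in\Sh$, and $s\stackrel{r}{\to}s'$ with $X\in\Sh'$. If $\Smaller'=\emptyset$ the Condition is vacuous. Otherwise I would apply Lemma~\ref{lemma:corresponding} to the valid resolving pair $s\stackrel{r}{\to}s'$: provided $\Smaller$ is nonempty and contains an $r$-decomposition variable, it yields supporting shortcuts $s_3$ of $s$ and $s_4$ of $s'$ with $s_3\stackrel{r}{\to}s_4$. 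By the induction hypothesis applied to the creation of $P_0$ in $s$, a prefix $P_0'$ of $P_0$ was created in $s_3$ (i.e.\ added to every variable of its main part) at a step $<i$, hence of depth $<i$. Because $s_3\stackrel{r}{\to}s_4$ forces an $r$-decomposition variable into the main part of $s_3$ (condition~\ref{resolve:1} of Definition~\ref{definition:resolve}), $P_0'$ is active there, so the construction lifts it along the arrow and creates $\forall r.P_0'$ in $s_4$ at a step $\le i<i+1$. Since $\forall r.P_0'$ is a prefix of $\forall r.P_0=P$ and $s_4$ is a supporting shortcut of $s'$, the Condition for creating $P$ in $s'$ is satisfied; note that the strict earliness is automatic, because $P_0'$ already had depth $<i$.

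It remains to treat the degenerate situations in which Lemma~\ref{lemma:corresponding} supplies no matching pair: namely $\Smaller=\emptyset$, or $\Smaller$ nonempty but without an $r$-decomposition variable. Using condition~\ref{resolve:4} of Definition~\ref{definition:resolve} one sees that in both cases no variable of $\Smaller'$ possesses an $r$-decomposition variable, so the prefixes that $\Smaller'$ requires are themselves $\bot$-particles of small role depth; these are produced either at step~$0$ (for $\bot$ in $S^\bot_{ini}$) or by the level-one creation of $\forall r.\bot$ permitted at step~$1$ in shortcuts with empty prefix part, and Definition~\ref{definition:shortcut}(\ref{definition:shortcut:condition-bottom}) guarantees that the relevant main parts consist entirely of $\bot$-variables. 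I expect the main obstacle to be exactly this coordination across the resolving relation: verifying that the lifted prefix $\forall r.P_0'$ lands in a \emph{supporting} shortcut of $s'$ rather than merely in some valid shortcut (which is precisely what Lemma~\ref{lemma:corresponding} is engineered to provide), together with the careful bookkeeping of the boundary cases above where a prefix part is empty or carries no $r$-decomposition variable.
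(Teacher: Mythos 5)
Your proposal takes essentially the same route as the paper: the paper's own proof is only a sketch stating that the lemma follows by induction on the step index $i$ (which coincides with the role depth of the particles created at that step), using Lemma~\ref{lemma:corresponding} and Lemma~\ref{lemma:level-one-validity}, and your argument is exactly that induction, with Lemma~\ref{lemma:corresponding} supplying the resolved pair of supporting shortcuts in the inductive step and the level-one/initial-shortcut machinery covering the base and degenerate cases. Your write-up is in fact more detailed than the paper's sketch; the one cosmetic difference is that where the paper cites Lemma~\ref{lemma:level-one-validity} by name, you invoke the same mechanism only implicitly (the step-$1$ creation of $\forall r.\bot$ in shortcuts with empty prefix part), and the coordination issue you flag at the end is precisely the point the paper delegates to Lemma~\ref{lemma:corresponding}.
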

		\begin{proof}(sketch)
		The proof proceeds by induction on 
	$i$, where 	$i$ is the index of a step in the construction. This index also corresponds to the role depth of every particle created at step 	$i$.
		For the proof, we use Lemma~\ref{lemma:level-one-validity} and Lemma~\ref{lemma:corresponding} to demonstrate how the prefixes are formed.

		\end{proof}

		The next lemma shows that the construction in the proof of Theorem~\ref{theorem:soundness} terminates.
		
		\begin{lemma}
			Let $\Gamma$ be a normalized unification problem. If the algorithm terminates with success,
			then the construction of a unifier terminates.
		\end{lemma}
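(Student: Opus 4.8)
The plan is to show that the step-indexed construction in the proof of Theorem~\ref{theorem:soundness} creates particles of strictly bounded role depth, so that only finitely many steps are ever executed. Recall that step $i$ produces exactly the particles of role depth $i$, and that a particle $P$ created in a shortcut $s=(\Sh,\Smaller)$ is \emph{active} precisely when it lands in some decomposition variable $X^r\in\Sh$; such a particle is deactivated at the next step by following the resolving edge $s\stackrel{r}{\to}s'$ and creating $\forall r.P$ in $s'$, a particle of role depth exactly one greater. Thus each unit of progress in the construction corresponds to traversing one edge of the resolving graph $(V,E)$ with $V=\s$, while increasing the role depth by one.

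The first step is to record that this graph is a finite directed acyclic graph. Finiteness is immediate, since $V=\s$ and there are only exponentially many candidate shortcuts. For acyclicity I would use the order in which \textsc{AllShortcuts} produces the shortcuts: the shortcuts of height~$0$ (Definition~\ref{definition:height0}) are computed first, and by Definition~\ref{definition:resolve} a shortcut $s$ can only be added once an \emph{already computed} shortcut resolving it is present. Assigning to each shortcut its creation rank therefore makes every resolving edge $s\stackrel{r}{\to}s'$ strictly rank-decreasing, which rules out cycles; this is exactly the acyclic structure already announced before the construction. Consequently every directed path in the resolving graph has length at most $|\s|$.

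The main argument then bounds the role depth. By the previous paragraph, any particle created during the construction arises from a directed path in the resolving graph that begins at one of the shortcuts seeded at Step~$0$ (namely \inibot\ or \inicons) and traverses one edge per role-depth increment. A particle that lands in a height-$0$ shortcut cannot be active, since such a shortcut has no decomposition variable in its main part, so that branch of the construction stops. As the resolving graph is acyclic, every such path reaches a height-$0$ shortcut after at most $|\s|$ edges, and hence no particle of role depth exceeding $|\s|$ is ever created. All particles produced therefore have the form $\forall v.\bot$ or $\forall v.A$ with $v\in\fun^{*}$ and $|v|\le|\s|$; since \fun\ is finite, there are only finitely many such particles, each created at most once. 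It follows that every $\gamma(X)$ is a finite set of particles and that the construction halts once the step index exceeds the length of the longest resolving path, so $\gamma$ is a well-defined \flbot-substitution.

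I expect the only delicate point to be the acyclicity claim: one must be sure that the deletions performed in the \textsc{CheckExistence}/\textsc{CheckValidity} loop cannot reintroduce a cycle. This is handled by the same rank argument, since deletions never create new shortcuts and thus never produce a resolving edge to a later-ranked shortcut; the creation ranks fixed by \textsc{AllShortcuts} remain a valid topological order of whatever subgraph survives the loop. With acyclicity in hand, the depth bound and hence termination are routine.
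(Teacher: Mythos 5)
Your proof is correct and takes essentially the same route as the paper: the paper's own argument is simply that the graph of shortcuts is acyclic, hence every path has length bounded by the total number of shortcuts, so the construction creates only finitely many particles, each of role depth at most exponential in the size of $\Gamma$. Your extra steps---justifying acyclicity via the creation order in \textsc{AllShortcuts}, observing that height-$0$ shortcuts are sinks where particles cannot be active, and checking that the deletions in \textsc{CheckExistence}/\textsc{CheckValidity} cannot reintroduce cycles---merely make explicit what the paper asserts without detailed proof.
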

		
		\begin{proof}
		The reason is that the graph of shortcuts is acyclic, so every path in this graph has finite length—no longer than the total number of shortcuts.
		Therefore, the construction of a solution terminates after creating only finitely many particles, each with a role depth at most exponential in the size of $\Gamma$.
		\end{proof}

		\section{Completeness}\label{section:completeness}
		
		\begin{theorem}\label{theorem:completeness}
			Let $\Gamma$ be a  normalized unification problem and let $\gamma$ be a ground unifier of $\Gamma$
			obeying the decreasing rule.
			Then the main algorithm will terminate with success. 
		\end{theorem}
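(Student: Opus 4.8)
The plan is to read a family of shortcuts directly off the given unifier $\gamma$, to show that these are valid shortcuts in the sense of Definition~\ref{definition:valid}, and to observe that among them are exactly the initial shortcuts \inibot and \inicons that Algorithm~\ref{algorithm:main-new} tests for. Since \textsc{AllShortcuts} computes every shortcut reachable from the height-$0$ shortcuts along the resolving relation (in particular every valid one), and the \textsc{CheckExistence}/\textsc{CheckValidity} loop deletes precisely those computed shortcuts that fail the two clauses of Definition~\ref{definition:valid}, every valid shortcut survives in the final set \s. Hence it suffices to exhibit the required initial shortcuts as valid shortcuts obtained from $\gamma$; their membership in \s then makes whichever of the three top-level cases of Algorithm~\ref{algorithm:main-new} applies return success. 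Throughout I would use the fact, guaranteed by the $\gamma$-guided branch of normalization in the proof of Lemma~\ref{lemma:completenessFlattening}, that the guesses recorded in $\Gamma$ agree with $\gamma$, i.e. $S_{ini}^\bot = \{X \mid \gamma(X) = \bot\}$ and $S_{ini}^A = \{X \mid A \in \gamma(X)\}$, and that every $\gamma(X)$ is reduced.

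For each ground particle $P$ in the range of $\gamma$ I would set $s_P = (\Sh_P, \Smaller_P)$ with $\Sh_P = \{X \mid P \in \gamma(X)\}$ and $\Smaller_P = \{X \mid \text{some prefix of } P \text{ (Definition~\ref{definition:prefix}) lies in } \gamma(X)\} \setminus \Sh_P$, and first check that $s_P$ is a shortcut (Definition~\ref{definition:shortcut}). Disjointness and $\Sh_P \neq \emptyset$ are immediate, using that a reduced concept never contains both a particle and one of its prefixes, and the bottom clause holds because $\bot$ has no proper prefix, so $\Smaller_\bot = \emptyset$ and, under the guess agreement, $\Sh_\bot = S_{ini}^\bot$. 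The essential point is that $s_P$ satisfies every flat subsumption $Y_1 \sqcap \cdots \sqcap Y_n \sqsubseteq^? X$, which is where I would invoke Lemma~\ref{lemma:characterization}. If $X \in \Sh_P$ then $P \in \gamma(X)$, so either $P \in \gamma(Y_i)$ for some $i$ (giving $Y_i \in \Sh_P$) or a $\bot$-prefix of $P$ lies in some $\gamma(Y_i)$ (giving $Y_i \in \Smaller_P$); the case $X \in \Smaller_P$ is analogous, using that a prefix of a prefix of $P$ is again a prefix of $P$. In particular $s_\bot = \inibot$, and $s_A$ is an initial shortcut for the constant (its main part is $S_{ini}^A$ and $\Smaller_A \subseteq S_{ini}^\bot$).

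Next I would prove that every $s_P$ is valid, by a downward induction on the role depth of $P$ for clause~1 of Definition~\ref{definition:valid} and an induction along the prefix chain for clause~2. For clause~2, the largest proper prefix $Q$ of $P$ yields $\Smaller_P = \Sh_Q \cup \Smaller_Q$, so $s_P \dasharrow s_Q$, and $s_Q$ is valid by induction. For clause~1, suppose $X^r \in \Sh_P$; then $P \in \gamma(X^r)$, and the increasing subsumption $X \sqsubseteq^? \forall r.X^r$ together with the decreasing rule forces the incidences between the decomposition variables in $\Sh_P, \Smaller_P$ and their parents that Definition~\ref{definition:resolve} demands for $s_P \stackrel{r}{\to} s_{\forall r.P}$; since $\forall r.P$ has larger depth, $s_{\forall r.P}$ is valid by induction, and the base case at maximal depth gives a height-$0$ shortcut, because no decomposition variable can then sit in $\Sh_P$ without forcing a deeper particle into the range (except in the degenerate bottom situation discussed below). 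Lemmas~\ref{lemma:adding-shortcuts}, \ref{lemma:level-one-validity} and~\ref{lemma:corresponding} are available to merge and realign supporting shortcuts when the incidences do not match verbatim.

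The step I expect to be the main obstacle is the treatment of $\bot$-particles, exactly the phenomenon isolated in Example~\ref{example:abstr-flatt}. When $\gamma(X) = \bot$ while $X$ carries a decomposition variable $X^r$ with $\gamma(X^r) = \bot$, the parent particle $\forall r.\bot$ does not appear at the top level of $\gamma(X)$ (it is absorbed by $\bot$), so the naive witness $s_{\forall r.P}$ no longer contains the required parent in its main part, and the straightforward resolving argument breaks. Handling this requires the second case of Definition~\ref{definition:resolve} (the clause $\Sh_1 \subseteq S_{ini}^\bot$ forcing $\Smaller_2 \subseteq S_{ini}^\bot$) together with a careful use of the decreasing-rule hypothesis on $\gamma$ to exclude the spurious configurations that Example~\ref{example:abstr-flatt} warns against; it is precisely this hypothesis that lets the induction go through in the degenerate bottom case. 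The remaining bookkeeping—verifying that each of the \flo, \pure, and full \flbot branches of Algorithm~\ref{algorithm:main-new} tests membership of exactly the initial shortcuts shown to be valid—is then routine.
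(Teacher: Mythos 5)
Your proposal coincides with the paper's proof in its first half: the paper defines exactly the same shortcuts $(\Sh_P, \Smaller_P)$ from the particles $P$ in the range of $\gamma$, notes that particles of maximal role depth yield height-$0$ shortcuts, and argues that, since $\gamma$ satisfies the increasing subsumptions and obeys the decreasing rule, these shortcuts resolve one another, are computed by \textsc{AllShortcuts}, and survive the \textsc{CheckExistence}/\textsc{CheckValidity} loop, so that the required initial shortcuts are found and the membership tests succeed.

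The genuine gap is in your final paragraph. You correctly isolate the hard case -- the normalization guesses give $S_{ini}^\bot = \emptyset$ while $\bot$-particles occur in the range of $\gamma$, so that a particle $\forall r.P$ is absorbed by a shorter $\bot$-prefix in the reduced concept $\gamma(X)$ and the natural witness $s_{\forall r.P}$ then violates condition~(\ref{resolve:1}) of Definition~\ref{definition:resolve} because the parent $X$ sits in its prefix part rather than its main part -- but you then merely assert that the second case of Definition~\ref{definition:resolve} plus the decreasing rule ``lets the induction go through.'' It does not: that second case is an additional \emph{restriction} on resolving (imposed when $\Sh_1 \subseteq S_{ini}^\bot$), not a repair mechanism, and none of Lemmas~\ref{lemma:adding-shortcuts}, \ref{lemma:level-one-validity}, \ref{lemma:corresponding} can move a missing parent into the main part of a resolving witness. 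The paper resolves this with a device absent from your proposal: it assumes w.l.o.g.\ that $\gamma$ is \emph{minimal} (the sum of role depths of particles in its range is minimal) and shows the hard case cannot occur for a minimal unifier. Concretely, if $S_{ini}^\bot = \emptyset$ but some $\bot$-particle exists, take a $\bot$-particle $P_{min}$ of minimal depth; its shortcut is $(\Sh_{min}, \emptyset)$, and by the decreasing rule no variable in $\Sh_{min}$ has the relevant decomposition variable defined; creating $\bot$ in $(\Sh_{min}, \emptyset)$ and replaying the shortcut structure via the construction of Theorem~\ref{theorem:soundness} produces a strictly smaller unifier $\gamma'$, contradicting minimality. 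Hence a minimal unifier in this case is an \flo-unifier, all prefix parts are empty, and the benign argument you gave suffices. Without this descent/minimality argument (or a genuine substitute), your induction has no content precisely in the case you yourself flagged as the main obstacle, so the proof is incomplete.
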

		
		
		\begin{proof}
			
			The proof demonstrates how the solution $\gamma$ guarantees a non-failing run of the algorithm. We assume that $\gamma$ is reduced \wrt to the properties of \flbot and is minimal in size; that is, the sum of the role depths of the particles in the range of $\gamma$ is minimal.

			We identify  the set of $\bot$-variables as 
			$\Sh_{ini}^\bot = \{ X \in \pred \mid \gamma(X) = \bot\}$ 
			and the set of variables containing constant $A$ as
			$\Sh_{ini}^A = \{X \in \pred \mid A \in \gamma(X)\}$
			
			If all these sets are empty, we can construct a solution sending all variables to $\top$.
			
			Otherwise, we identify shortcuts defined by $\gamma$ in the following way:
			for a given particle $P$ in the range of $\gamma$, there is a shortcut, $(\Sh_P, \Smaller_P)$ where:
			
			$\Sh_P = \{X \in \pred \mid P \in \gamma(X)\}$,
			
			$\Smaller_P = \{Y \in \pred \mid P' \in \gamma(Y),  P' \text{ is a prefix of } P\}$.
			
			For a shortcut  $(\Sh_P, \Smaller_P)$, $\Smaller_P$ satisfies the conditions of  Definition~\ref{definition:shortcut}.
			The algorithm calls the sub-procedure \textsc{AllShortcuts} which computes all possible shortcuts defined in this way.
			First let us notice that there will be shortcuts of height $0$, namely the shortcuts defined
			for the particles of maximal role depth in $\gamma$. Hence, the set of computed shortcuts will not be empty.
			
			Now it is easy to see that since $\gamma$ satisfies increasing subsumptions and obeys the decreasing rule, then
			in the set of all shortcuts defined by $\gamma$, there will be shortcuts properly resolved between themselves and thus they will be computed by  \textsc{AllShortcuts} and will not be deleted during the run of the algorithm.
			
%
%
			Hence, for each particle $P$ in the range of $\gamma$,  a shortcut defined by $\gamma$ will be computed and not deleted. Now we consider the
			particular cases determined by the initial shortcuts.
			Here we justify completeness only in the most interesting case, \ie when
			$\Sh_{ini}^\bot$ is empty. In this case we do not have any $\bot$-variables. 
				If there are no $\bot$-particles in the range of $\gamma$, then $\gamma$ is an \flo-unifier.
				There are no prefixes for the $A$-particles, hence in each shortcut, the prefix part is empty. The initial shortcut $(\Sh_{ini}^A, \emptyset)$ is computed by  \textsc{AllShortcuts}. This clearly shows that the algorithm is a generalization of the unification algorithm for \flo.
				
				If there are $\bot$-particles in the range of $\gamma$, we show that there is a smaller unifier $\gamma'$, which contradicts
				our assumption about minimality of $\gamma$. 

				For that, let us consider a $\bot$-particle $P_{min}$ of minimal role depth assigned to a variable by $\gamma$. We know that it is not $\bot$.
				Now, consider the shortcut $(\Sh_{min}, \Smaller_{min})$ defined by $\gamma$ for $P_{min}$. Let us notice that
				in this case $\Smaller_{min}$ must be empty, since otherwise $\gamma$ assigns some prefixes of $P_{min}$ to some variables, which
				contradicts minimality of $P_{min}$.
				
				Now assume w.l.o.g. that the particle $P_{min}$ is of the form $\forall r.P'$. It implies that all variables in $\Sh_{min}$ do not have any $r$-decomposition variables defined. Otherwise, since $\gamma$ obeys the decreasing rule, we would have a shortcut for $P'$ and $P'$ is smaller than $P_{min}$.
				
				Hence, $(\Sh_{min}, \emptyset)$ is a resolved shortcut that has no $r$-incoming edges in the graph of shortcuts. Notice also that the variables in $\Sh_{min}$ are not $\bot$-variables, since
				they contain $P_{min}$ under the substitution $\gamma$.
				
				We obtain a different substitution $\gamma'$ by creating $\bot$ in $(\Sh_{min}, \emptyset)$. 
				This changes all variables in $\Sh_{min}$ to $\bot$-variables. 
				Now, we create new substitution $\gamma'$ using the old structure of shortcuts defined by $\gamma$.
				Notice that now Condition~\ref{definition:shortcut:condition-bottom} of Definition~\ref{definition:shortcut} is not satisfied, because  $\bot$-variables from $\Sh_{min}$ may occur in the main part of some shortcuts.

				In order to satisfy the increasing subsumptions, we use the same paths of resolving relation as in the case of $\gamma$ starting with the
				shortcut $(\Sh_{min}, \emptyset)$ as the new \inibot.
				
				The construction from the proof of Theorem~\ref{theorem:soundness} goes through, although the new substitution $\gamma'$ may be not reduced.
				After reductions, $\gamma'$ is strictly smaller than $\gamma$, since at least the $\bot$-particles assigned by $\gamma$ to variables in $\Sh_{min}$, in $\gamma'$ are replaced by $\bot$. Other particles
				might also be reduced by introducing $\bot$ in $\gamma'$.
				
				The construction is illustrated in Example~\ref{example}.
				
					The conclusion of this argument is that if a solution to a problem uses $\bot$ in its particles,
				the set of $\bot$-variables must be non-empty or else the solution is not minimal.
				
				\end{proof}
				%
				%
				
				
				\begin{example}\label{example}
					Let the goal be $\Gamma = \{X \sqsubseteq^? \forall r.Y, Y \sqsubseteq^? \forall r.X, X \sqsubseteq^? \forall r.A\}$
					Let the normalized $\Gamma$ be $\{X^r \sqsubseteq^? Y, Y^r \sqsubseteq^? X, X^r \sqsubseteq^? A\}$ with increasing subsumptions omitted.
					Let $X^r \sqsubseteq^? A$ be a start subsumption.
					
					Now let a solution $\gamma$ be:
					$[X^r \mapsto A \sqcap \forall r.\bot, Y^r \mapsto \forall r.\bot, X \mapsto \forall r.A \sqcap \forall rr.\bot, Y \mapsto \forall rr.\bot]$.
					
					One can see that $\gamma$ is a solution. It satisfies the increasing subsumptions and the decreasing rule.
					
%
					%
					For $\forall r.\bot$, $\gamma$ defines shortcut $(\{X^r, Y^r\}, \emptyset)$ and this is $(\Sh_{min}, \emptyset)$ from the above argument.
					%
					%
					%
					If $\bot$ is created in $(\{X^r, Y^r\}, \emptyset)$, the
					structure can be reused to generate $\gamma'$.
					
					$[X^r \mapsto A \sqcap \bot, Y^r \mapsto \bot, X \mapsto \forall r.A \sqcap \forall r.\bot,
					Y \mapsto \forall r.\bot]$.
					
					After removing redundant particles, we have a smaller solution with $\bot$-variables: $X^r$ and $Y^r$. 
%

				\end{example}

		\section{Conclusions}\label{section:conclusions}
		

		The procedure presented in this paper shows that the \flbot-unification is
		in the ExpTime-class of problems. Theorem 1 in \cite{BaFe-UNIF-22} further shows that our problem is also ExpTime-hard.
		
		The algorithm described here is a generalization  of the unification algorithm that can be 
		extracted from \cite{Borgwardt} for the description logic \flo. A more direct exposition of that
		algorithm may be found in \cite{Morawska2025a}.
		The result  extends the area of description logics for which unification is known,
		\ie the description logic \el and \flo. 
		
		One can attempt to extend similar techniques to the logic $\mathcal{FLE}$, which combines \flo with \el.  However, the main challenge remains the logic \alc, which adds negation to the constructors and is therefore a full Boolean description logic. Very little is known about unification in \alc, except that it is undecidable in an extension of this logic \cite{Wolter2008}.
		
		Our aim is to implement the algorithm presented in this paper and perform experiments that may lead to useful applications in the area of knowledge representation.
	\bibliographystyle{kr}
	\bibliography{FLbottom}
\end{document}